\newtheorem{theorem}{Theorem}
\newtheorem{conj}{Conjecture}
\newtheorem{proposition}[theorem]{Proposition}
\newtheorem{corollary}[theorem]{Corollary}
\newtheorem{thm}{Theorem}[section]
\theoremstyle{definition}
\newtheorem{remark}[thm]{Remark}
\newtheorem{example}[thm]{Example}
\newtheorem{definition}[thm]{Definition}
\title{Revisiting a synthetic intracellular regulatory network that exhibits oscillations}
\author{Jonathan Tyler, Anne Shiu, Jay Walton}
\begin{document}
\maketitle

\begin{abstract}
In 2000, Elowitz and Leibler introduced the repressilator--a synthetic gene circuit with three genes that cyclically repress transcription of the next gene--as well as a corresponding mathematical model. Experimental data and model simulations exhibited oscillations in the protein concentrations across generations.  In 2006, M\"{u}ller \textit{et al.}\ generalized the model to an arbitrary number of genes and analyzed the resulting dynamics.  Their new model arose from five key assumptions, two of which are restrictive given current biological knowledge.  Accordingly, we propose a new repressilator system that allows for general functions to model transcription, degradation, and translation.  We prove that, with an odd number of genes, the new model has a unique steady state and the system converges to this steady state or to a periodic orbit.  We also give a necessary and sufficient condition for stability of steady states when the number of genes is even and conjecture a condition for stability for an odd number.  Finally, we derive a new rate function describing transcription that arises under more reasonable biological assumptions than the widely used single-step binding assumption.  With this new transcription-rate function, we compare the model's amplitude and period with that of a model with the conventional transcription-rate function.  Taken together, our results enhance our understanding of genetic regulation by repression.
\end{abstract}

\section{Introduction}
\label{intro}
The \textit{repressilator} is an experimental preparation used in synthetic biology to better understand genetic regulation by repression.  Introduced in 2000 by Elowitz and Leibler, the repressilator is a feedback loop consisting of three genes that each cyclically represses transcription of the next gene (Figure \ref{fig:repress_network}). The network was synthesized in \textit{E.coli} cells and exhibited sustained limit-cycle oscillations in single cells and across generations \cite{repressilator}.  

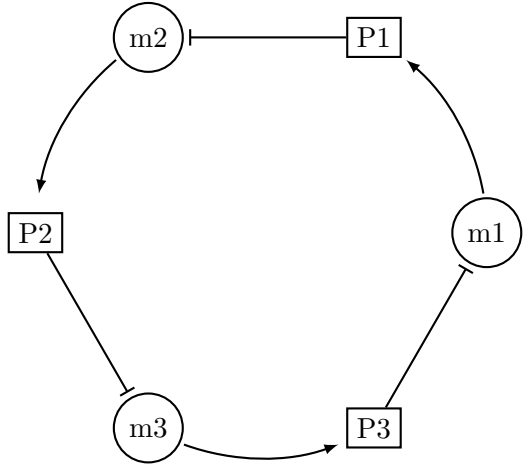
\begin{figure}[ht]
\centering
\begin{tikzpicture}

\def \n {5}
\def \radius {3cm}
\def \margin {10} 

\node[draw, circle, thick] (m1) at ({0}:\radius) {m1};
\draw[->, >=latex, thick] ({0+\margin}:\radius)
	arc ({0+\margin}:{360/6-\margin}:\radius);
\node[draw, rectangle, thick] (P1) at ({360/6}:\radius) {P1};
\node[draw, circle, thick] (m2) at ({360/6*2}:\radius) {m2};
\draw[->, >=latex, thick] ({360/6*2+\margin}:\radius)
	arc ({360/6*2+\margin}:{360/6*3-\margin}:\radius);
\node[draw, rectangle,thick] (P2) at ({360/6*3}:\radius) {P2};
\node[draw, circle, thick] (m3) at ({360/6*4}:\radius) {m3};
\draw[->, >=latex, thick] ({360/6*4+\margin}:\radius)
	arc ({360/6*4+\margin}:{360/6*5-\margin}:\radius);
\node[draw, rectangle,thick] (P3) at ({360/6*5}:\radius) {P3};
\edge[-Bar, shorten >= 2pt, thick, black] {P1}{m2};
\edge[-Bar, shorten >= 2pt, thick, black] {P2}{m3};
\edge[-Bar, shorten >= 2pt, thick, black] {P3}{m1};

\end{tikzpicture}
\caption{The repressilator network with three genes and their respective products \cite{repressilator}.  The $m$'s denote mRNA while the $P$'s denote proteins.  The product of gene 1 represses transcription of gene 2; the product of gene 2 represses transcription of gene 3; the product of gene 3 represses transcription of gene 1.}
\label{fig:repress_network}
\end{figure}

In addition to presenting experimental results, Elowitz and Leibler also introduced a mathematical model to describe the dynamics of the repressilator.  This model was subsequently generalized by M\"{u}ller \textit{et al.}\ in 2006 \cite{Muller2006}.  Specifically, M\"{u}ller \textit{et al.}\ analyzed two systems of ODEs that describe the dynamics of a repressilator with an arbitrary number of genes.  One system assumed that, in saturated amounts of repressors, transcription occurs at a very low rate.  Muller \textit{et al.}\ called this system \textit{RepLeaky} and proved results about the number of steady states, the stability of those steady states, and the limiting dynamics \cite{Muller2006}.  Here, the RepLeaky system is the starting point for our generalized repressilator model.

The RepLeaky system of M\"{u}ller \textit{et al.}\ arose from five key assumptions \cite{Muller2006}:
\begin{enumerate}[label = (\alph*)]
\item Genes are present in constant amounts.
\item When a protein binds to a regulatory element of a gene, it either enhances or inhibits transcription.  Also, binding reactions are in equilibrium.
\item Transcription and translation operate under saturated conditions.
\item Both mRNAs and free proteins are degraded by first-order reactions.
\item Transcription, translation, and degradation rates are the same among genes, mRNAs, and proteins, respectively.
\end{enumerate}
Two of these assumptions are biologically restrictive, so we generalize the model by removing them.  Consider, for example, the translation process.  In eukaryotic cells, mRNAs must be spliced correctly before they can exit the nucleus and then be translated \cite{splicing}. Similarly, since transcription depends on the uncoiling of DNA due to different locations of genes on histones \cite{uncoiling}, transcription rates should be allowed to vary across genes.  Finally, ubiquitization, which facilitates degradation, also differs extensively among proteins \cite{cell}.  Thus, to be more faithful to the biology, we remove assumption (e).

Next, we consider assumption (d).  Recently, Page and Perez-Carrasco have analyzed the repressilator after allowing for differing degradation rates among the proteins \cite{Page2018}.  Here, we argue for a further generalization.  In the context of the degradation pathway of a core clock component of the \textit{Neurospora} circadian clock, phosphorylation of the FREQUENCY (FRQ) protein initiates its own degradation.  This process occurs through the ubiquitin-proteasome pathway, which is a Michaelis-Menten pathway \cite{He953}.  Modeling the rate of FRQ degradation as proportional to its concentration is therefore not appropriate.  Thus, for our repressilator model, we remove assumption (d) to allow for more general functions than first-order terms.  In Section 2, we give conditions that these new terms must satisfy to reflect the biology of degradation.  We then prove results on how, if at all, these new terms change the dynamics of the model.

Finally, following the discussion in \cite{Kim2016}, we advocate for changing how we model repression and, in particular, we allow for a wider range of transcription-rate functions that satisfy a few biological assumptions.  The \textit{Hill function}, which is the standard transcription-rate function, arises from the following ``single-step assumptions" \cite{forger}:

\begin{enumerate}
\item On the promoter, either no repressor proteins are bound and transcription occurs, or repressors proteins are bound to all binding sites and no transcription occurs.
\item The repressor protein binds rapidly to the promoter.
\end{enumerate}
It is noted in \cite{forger}: ``As these assumptions are very restrictive, it is very surprising how often the Hill expression is used."  Accordingly, we introduce the following alternate set of assumptions, similar to those given in \cite{forger}:

\begin{enumerate}
\item There are $m$ binding sites on a promoter, and the repressor proteins bind in order from sites 1 to $m$.
\item Transcription cannot occur if $m$ repressor proteins are bound to the promoter.  Transcription can occur in all other cases.
\item The repressor protein binds rapidly to the promoter.
\item Repressor proteins bind to the $m$ binding sites at varying rates.
\end{enumerate}
We label these assumptions the \textit{successive-binding assumptions} and use them to derive a new transcription-rate function in Section 3.  
  
We prove that many of the results of M\"{u}ller \textit{et al.}\ extend to our generalized model of the repressilator.  First, with an odd number of genes, the system has a unique steady state, called the \textit{central steady state}, and the system converges to that steady state or produces limit-cycle oscillations.  Next, we prove a necessary and sufficient condition for stability of any steady state in the case of an even number of genes.  We also discuss what the condition means biologically.  In Section 3, we derive a new transcription-rate function from the successive-binding assumptions and show that it satisfies natural, biological conditions on models of transcription, presented in Section 2.  In Section 4, we numerically compare the amplitude and period of repressilator models constructed with the traditional transcription-rate function versus our newly derived function.  Finally, we end with a discussion in Section 5.

\section{General Repressilator System}

In this section, we introduce the new repressilator system and prove results about its steady states, stability, and asymptotic behavior.  First, we recall M\"uller \textit{et al.}'s RepLeaky model \cite{Muller2006}, which arises from a generalization of Figure \ref{fig:repress_network} to $n$ genes, and is given by the following system of $2n$ ODEs where $n$ denotes the number of genes:
\begin{equation}\label{eq:Muller_system}
\begin{aligned}
\dot{r}_i &= \alpha f(p_{i-1})-r_i, \\
\dot{p}_i &= \beta r_i-\beta p_i, 
\end{aligned}
\end{equation}
for $i=1,...,n$.  Here, $p_{i}$ denotes the concentration of protein-$i$, where $i$ is viewed mod $n$, and $r_i$ denotes the mRNA concentration.  The parameter $\beta$ is the ratio of protein degradation to mRNA degradation, and the parameter $\alpha$ is the transcription rate.  The function $f(x)$ models the repression of gene-$i$ transcription resulting from repressor protein-($i-1$) binding to the promoter (see Figure \ref{fig:repress_network}): 
\begin{equation}
f(x) = \frac{1-\delta}{1+x^h}+ \delta, \nonumber
\end{equation}
where the parameter $\delta$ is the ratio of repressed to unrepressed transcription \cite{Muller2006}.  Synthesis of protein-$i$ occurs by translation of mRNA-$i$ and is proportional to the mRNA-$i$ concentration.  Degradation of each species is modeled by a first-order term proportional to its own concentration.

As mentioned in the introduction, our aim is to generalize the repressilator by allowing for general degradation-rate, transcription-rate, and translation-rate functions as well as differing rate constants.  Our \textbf{generalized $n$-gene repressilator system}, which generalizes \eqref{eq:Muller_system}, is given by the following system of ODEs:
\begin{equation}\label{eq:repress}
\begin{aligned}
\dot{r}_1 &= a_1(p_n) - d_{r_1}(r_1),\\ 
\vdots \\ 
\dot{r}_n &= a_n(p_{n-1}) - d_{r_n}(r_n),\\ 
\dot{p}_1 &= k_1(r_1) - d_{p_1}(p_1), \\
\vdots \\
\dot{p}_n &= k_n(r_n) - d_{p_n}(p_n). \\
\end{aligned}
\end{equation}
Here, for the $i$-th gene, $r_i$ is the concentration of mRNA-$i$, and $p_i$ is the concentration of the protein.  Each equation in the system has a synthesis term and a degradation term.  One synthesis term is the function $a_i(p_{i-1})$, called the \textit{transcription-rate} function of gene-$i$ in terms of protein-($i-1$).  The degradation term for mRNA-$i$ is the \textit{degradation-rate} function $d_{r_i}(r_i)$, which is a function of its own concentration.  The function $k_i(r_i)$ is the \textit{translation-rate} function describing the synthesis of protein-$i$ in terms of mRNA-$i$.  Finally, the \textit{degradation-rate} function $d_{p_i}(p_i)$ models the degradation of protein-$i$ as a function of its own concentration.  

The 3-gene version of system \eqref{eq:repress} reflects Figure \ref{fig:repress_network}.  The \textbf{m1} node describes mRNA-1 which translates, according to the function $k_1(r_1)$, to protein-1, \textbf{P1}.  This protein then represses the synthesis of the second mRNA, which is described by the transcription-rate function $a_2(p_1)$.  

Next, we give conditions on the transcription-rate, degradation-rate, and translation-rate functions that we will assume for the results below.  These assumptions are rooted in the biology of the specific process they model.  For the transcription-rate functions, we begin with the biological assumptions.

\begin{description}
\item [(B1)] Transcription rates vary smoothly in the amount of repressor present.  
\item [(B2)] Transcription rates are always nonnegative.
\item [(B3)] Transcription rates decrease with increased repressor present.
\item [(B4)] Transcription rates are positive when no repressor is present.
\end{description}
These biological assumptions translate into the following mathematical assumptions on the transcription-rate function $a_i(x)$:
\begin{description}
\item [(A1)] $a_i(x) \in C^1[\mathbb{R}_{\geq 0}]$.
\item [(A2)] $a_i(\mathbb{R}_{\geq 0}) \subset \mathbb{R}_{\geq 0}$.
\item [(A3)] $a_i(x)$ is strictly decreasing on $\mathbb{R}_{\geq 0}$.
\item [(A4)] $a_i(0) > 0$.
\end{description}

The canonical transcription-rate function is $a_i(p) = \frac{k_i^S}{1+p^h}$ for some Hill coefficient $h$ \cite{Muller2006}.  This function is derived from the single-step binding assumptions listed in Section \ref{intro}, and it is easily seen that this function satisfies (A1)-(A4).  In Section 3, we derive another transcription-rate function using the successive-binding assumptions listed in Section \ref{intro} and show that this function also satisfies assumptions (A1)-(A4).

Next, we provide biological assumptions for degradation-rate and translation-rate functions.

\begin{description}
\item [(B1)] Degradation and translation rates vary smoothly in the protein or mRNA concentration.
\item[(B2)] Degradation and translation rates occur only when the protein or mRNA is present.
\item [(B3)] Degradation and translation rates increase as protein or mRNA concentrations increase.
\end{description}
These assumptions give rise to the following mathematical assumptions on the degradation-rate and translation-rate functions $d_{p_i}(x)$, $d_{r_i}(x)$, and $k_i(x)$.
\begin{description}
\item [(D1)] $d(x), k(x) \in C^1[\mathbb{R}_{\geq 0}]$.
\item[(D2)] $d(0)=k(0) = 0$. 
\item [(D3)] $d(x), k(x)$ are strictly increasing on $\mathbb{R}_{> 0}$.
\end{description} 

Notice immediately that degradation-rate and translation-rate functions satisfying (D1)-(D3) are invertible on their ranges.  This will be important in the following section.  

For the remainder of the paper, when considering our repressilator system~\eqref{eq:repress}, we assume that the functions $a_i(p_{i-1})$ satisfy (A1)-(A4), and the functions $d_{p_i}(p_i)$, $d_{r_i}(r_i)$, and $k_i(r_i)$ satisfy (D1)-(D3).  

\subsection{Steady States}
For system \eqref{eq:Muller_system}, M\"{u}ller \textit{et al.}\ proved the existence of a unique steady state, labeled $E_C$ for \textit{central steady state}, in the odd-$n$ case and also showed that $E_C$ exists in the even-$n$ case \cite{Muller2006}.  When we allow general transcription-rate and degradation-rate functions in system \eqref{eq:repress}, however, we are not always guaranteed a steady state.  Consider the following example.

\begin{example}\label{example:even_example} Consider the following $2$-gene version of the repressilator system \eqref{eq:repress}:
\begin{equation}\label{eq:ex1}
\begin{aligned}
\dot{r}_1 &= 2\pi - \arctan(p_2) - r_1 \\
\dot{r}_2 &= 2\pi - \arctan(p_1) - r_2\\
\dot{p}_1 &= r_1 - \arctan(p_1)\\
\dot{p}_2 &= r_2 - \arctan(p_2).
\end{aligned}
\end{equation}
\noindent It is straightforward to check that the assumptions (A1)-(A4) and (D1)-(D3) hold for the corresponding functions $a_i = 2\pi -\arctan(p_{i-1})$, $d_{r_i} = r_i$, and $d_{p_i} = \arctan(p_i)$.  We set the equations in \eqref{eq:ex1} to zero to solve for the steady states, giving
\begin{equation}\label{eq:eqn1}
2\pi - \arctan(p_2) = \arctan(p_1)
\end{equation}
\begin{equation}\label{eq:eqn2}
2\pi - \arctan(p_1) = \arctan(p_2).
\end{equation}
However, Eqns. \eqref{eq:eqn1} and \eqref{eq:eqn2} have no positive, real solution.  Therefore, system \eqref{eq:ex1} has no steady state.  The same is true if we augment system \eqref{example:even_example} to three genes using the same functions for the mRNA and protein, respectively.  
\end{example}
What went wrong in this example?  The degradation-rate function $d_{p_i}$ and the transcription-rate function $a_i$ each had a horizontal asymptote that prevented intersection of their respective graphs in $\mathbb{R}^2_+$.  This lack of intersection precluded the existence of a steady state.  So, to prove when steady states exist, we must introduce more assumptions.

Notice that assumptions (A2) and (A3) imply: 
\begin{equation}
\alpha_i:=\lim_{x \to \infty} a_i(x) < \infty \quad \text{and} \quad \lim_{x \to \infty} a_i^{\prime}(x) = 0. \nonumber
\end{equation}
This parameter $\alpha_i$ corresponds to the \textit{leakiness} of the promoter of gene-$i$ \cite{Muller2006}.  If $\alpha_i > 0$, then even in saturated amounts of repressor, gene-$i$ will still be transcribed at a positive rate, whereas $\alpha_i = 0$ implies that in saturated amounts of repressor, gene-$i$ will not be transcribed.  We introduce a new assumption on the transcription-rate function $a_i(p)$. 

\begin{description}
\item [(A5)] $\alpha_i = 0$ for all $i=1,...,n$.
\end{description}

Even if the leakiness $\alpha_i$ is nonzero, we can avoid the problem highlighted in Example \ref{example:even_example} by introducing an assumption on the relationship among the transcription-rate and degradation-rate functions.  Let us define 
\begin{equation}
\delta_i^R := \lim_{x \to \infty} d_{r_i}(x)  \quad \text{and} \quad   \delta_i^P := \lim_{x \to \infty} d_{p_i}(x).  \nonumber
\end{equation}
We allow for $\delta_i^R$ and $\delta_i^P$ to be infinite.  The $\delta_i^P$'s and $\delta_i^R$'s correspond to the maximum possible degradation rate for protein-$i$ and mRNA-$i$, respectively.  To avoid the problem in Example \ref{example:even_example}, we introduce the following relationship among $\delta_i^R$, $\delta_i^P$, and $a_i$: 

\begin{description}
\item [(A6)] $\delta_i^P > k_i(d_{r_i}^{-1}(a_i(0))) \quad \text{and} \quad \delta_i^R > a_i(0)$, for all $i=1,...,n$.
\end{description} 

Below, by using combinations of the above assumptions and others, we prove conditions under which $E_C$ exists, first with an odd number of genes, and then with an even number.

\subsubsection{Odd-$n$ Case}

For system \eqref{eq:Muller_system}, M\"{u}ller \textit{et al.}\ showed that the system has a unique steady state \cite{Muller2006}.  We prove that this property extends to system \eqref{eq:repress}.

\begin{proposition}\label{prop:unique}
For $n$ odd, if system \eqref{eq:repress} satisfies (A6), then system \eqref{eq:repress} has a unique steady state in $\mathbb{R}^{2n}_+$.
\end{proposition}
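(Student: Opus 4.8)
The plan is to reduce the steady-state equations to a single one-dimensional fixed-point problem and exploit the cyclic, alternating-monotonicity structure of \eqref{eq:repress}. Setting the right-hand sides of \eqref{eq:repress} to zero gives, for each $i$, the two equations $a_i(p_{i-1}) = d_{r_i}(r_i)$ and $k_i(r_i) = d_{p_i}(p_i)$ (indices mod $n$). Since (D1)--(D3) make each $d_{r_i}$, $d_{p_i}$, and $k_i$ strictly increasing with value $0$ at the origin, they are invertible on their ranges, so I can solve the first equation for $r_i = d_{r_i}^{-1}(a_i(p_{i-1}))$ and substitute into the second to obtain $p_i = g_i(p_{i-1})$, where
\[
g_i := d_{p_i}^{-1} \circ k_i \circ d_{r_i}^{-1} \circ a_i.
\]
Iterating $p_i = g_i(p_{i-1})$ around the cycle shows that steady states correspond exactly to fixed points of the composition $G := g_n \circ g_{n-1} \circ \cdots \circ g_1$ via the scalar equation $p_n = G(p_n)$, after which every remaining coordinate is uniquely and explicitly determined.

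First I would verify that each $g_i$ is a well-defined, continuous, strictly decreasing map from $\mathbb{R}_{\geq 0}$ into $\mathbb{R}_{\geq 0}$. Well-definedness is precisely where assumption (A6) enters, and I expect this to be the main obstacle: for $d_{r_i}^{-1}(a_i(p_{i-1}))$ to make sense, $a_i(p_{i-1})$ must lie in the range $[0,\delta_i^R)$ of $d_{r_i}$, which holds for all $p_{i-1}\geq 0$ because $a_i(p_{i-1}) \leq a_i(0) < \delta_i^R$ by the second inequality in (A6); likewise $d_{p_i}^{-1}$ can be applied only when its argument lies in $[0,\delta_i^P)$, and the maximal value of $k_i(d_{r_i}^{-1}(a_i(\cdot)))$ is $k_i(d_{r_i}^{-1}(a_i(0)))$, which is strictly below $\delta_i^P$ by the first inequality in (A6). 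This is exactly the failure illustrated in Example \ref{example:even_example}, where the asymptotes of $a_i$ and $d_{p_i}$ prevented the graphs from meeting; (A6) rules that out. Monotonicity is then routine: $a_i$ is strictly decreasing by (A3), while $d_{r_i}^{-1}$, $k_i$, and $d_{p_i}^{-1}$ are strictly increasing, so each $g_i$ is strictly decreasing.

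With the $g_i$ in hand, the odd parity of $n$ does the work: a composition of an odd number of strictly decreasing functions is strictly decreasing, so $G$ is continuous and strictly decreasing on $\mathbb{R}_{\geq 0}$. I would then study $H(p) := G(p) - p$. Since $a_i(0)>0$ forces $g_i(0)>0$, one gets $G(0)>0$, hence $H(0)>0$; and because each $g_i$ is bounded, $G$ is bounded above, so $H(p)\to-\infty$ as $p\to\infty$. The intermediate value theorem then yields a root, and strict monotonicity of $H$ makes it unique. This unique $p_n^\ast>0$ propagates through $p_i^\ast = g_i(p_{i-1}^\ast)$ and $r_i^\ast = d_{r_i}^{-1}(a_i(p_{i-1}^\ast))$ to a unique steady state; a final check that each coordinate is strictly positive—using $a_i(p_{i-1}^\ast)>0$ together with $d_{r_i}^{-1}(0)=0$, $k_i(0)=0$, and $d_{p_i}^{-1}(0)=0$—places it in $\mathbb{R}^{2n}_+$ and completes the argument.
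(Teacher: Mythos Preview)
Your proposal is correct and follows essentially the same approach as the paper: both reduce the steady-state equations to the fixed-point problem $p_i = (d_{p_i}^{-1}\!\circ k_i\!\circ d_{r_i}^{-1}\!\circ a_i)(p_{i-1})$, compose around the cycle, and use that an odd composition of strictly decreasing functions is strictly decreasing, positive at $0$, and hence has a unique fixed point. Your write-up is in fact more careful than the paper's in spelling out exactly how each inequality in (A6) guarantees that the inverses are applied within their domains, and in verifying positivity of all coordinates.
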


\begin{proof}
First, we set the equations in system \eqref{eq:repress} to zero:
\begin{equation}\label{eq:r}
0 = \dot{r}_i = a_i(p_{i-1}) - d_{r_i}(r_i) \implies d_{r_i}(r_i) = a_i(p_{i-1}) 
\end{equation}
\begin{equation}\label{eq:p}
0 = \dot{p}_i = k_i(r_i)-d_{p_i}(p_i) \implies d_{p_i}(p_i) = k_i(r_i). 
\end{equation}
From Eqns. \eqref{eq:r} and \eqref{eq:p}, it is easy to check that finding steady states reduces to finding solutions to the system
\begin{equation}
p_i = d_{p_i}^{-1}\circ k_i \circ d_{r_i}^{-1}\circ a_i(p_{i-1}) \quad \text{for } i = 1,\dots,n. \nonumber
\end{equation}
Write $f_i = d_{p_i}^{-1}\circ k_i \circ d_{r_i}^{-1}\circ a_i$, which, if assumption (A6) holds, is well defined.  

We compose the $f_i$'s to obtain a fixed-point problem:
\begin{equation}\label{eq:fixed}
p_i = f_i \circ f_{i-1} \circ \dots \circ f_1 \circ f_n \circ \dots \circ f_{i+1}(p_i), \quad \text{ for } i = 1,\dots,n.
\end{equation}
Since the $f_i$'s are monotonically decreasing by (A3) and (D3) and we are composing an odd number of functions, the composition in \eqref{eq:fixed} is monotonically decreasing.  It is also positive at 0 by (A3), (A4), (D2), and (D3).  Therefore, for $i=1,\dots,n$, there is exactly one solution to Eqn. \eqref{eq:fixed} in $\mathbb{R}^+$, so system \eqref{eq:repress} has a unique steady state in $\mathbb{R}^{2n}_+$.
\end{proof}
We follow the notation in \cite{Muller2006} and label this unique steady state as follows: 
\begin{definition}
The \textbf{central steady state}, $E_C$, is the concentration vector
\begin{equation}\label{eq:steady_state}
\left(d_{r_1}^{-1}\circ a_1(p_n^*), d_{r_2}^{-1}\circ a_2(p_1^*), \dots, d_{r_n}^{-1}\circ a_n(p_{n-1}^*), p_1^*, \dots, p_n^*\right), \
\end{equation}
where, (for $i=1,\dots,n$), $p_i^*$ solves Eqn.\ \eqref{eq:fixed}.
\end{definition}

\begin{remark}\label{rem:E_Css}
A solution to Eqn.\ \eqref{eq:fixed} gives a steady state as in \eqref{eq:steady_state} regardless of whether $n$ is even or odd because it solves a fixed-point problem derived from setting the equations of system \eqref{eq:repress} to zero. 
\end{remark}

\subsubsection{Even-$n$ Case}

Below, we give various conditions under which the fixed-point problem in Eqn. \eqref{eq:fixed} has a solution and consequently, guarantees when $E_C$ is a steady state. First, however, we must introduce another assumption on the degradation-rate functions.

\begin{description}
\item [(D4)] $(d_{p_i})^{\prime}(0) \neq 0$ and $(d_{r_i})^{\prime}(0) \neq 0$ for all $i=1,...,n$.
\end{description}

\begin{remark}
Assumption (D4) is biologically reasonable as many commonly used degradation-rate functions satisfy (D4), e.g., linear degradation and Michaelis-Menten kinetics.  However, there exist degradation processes that do not satisfy (D4).  For example, consider a protein that is selected for degradation by dimerization with itself.  If we model this scenario with a quadratic degradation term, then it will not satisfy assumption (D4).    
\end{remark}

With assumption (D4), we can now prove various conditions under which system \eqref{eq:repress} admits a steady state.  

\begin{proposition}\label{prop:even_ss}
For system \eqref{eq:repress} with $n$ even, if assumptions (A5), (A6), and (D4) hold, then $E_C$ exists and is a steady state.
\end{proposition}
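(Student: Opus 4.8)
The plan is to mirror the reduction used in the proof of Proposition \ref{prop:unique}: a steady state of \eqref{eq:repress} corresponds exactly to a solution of the fixed-point problem \eqref{eq:fixed}, where $f_j = d_{p_j}^{-1}\circ k_j \circ d_{r_j}^{-1}\circ a_j$. First I would record that (A6) makes every $f_j$ well defined on all of $\mathbb{R}_{\ge 0}$ — the condition $\delta_j^R > a_j(0)$ keeps $a_j(\cdot)$ inside the range of $d_{r_j}$, while $\delta_j^P > k_j(d_{r_j}^{-1}(a_j(0)))$ keeps $k_j(d_{r_j}^{-1}(a_j(\cdot)))$ inside the range of $d_{p_j}$ — and that each $f_j$ is continuous and strictly decreasing, since the inverses $d_{r_j}^{-1},d_{p_j}^{-1}$ exist and are continuous and increasing by (D1),(D3). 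Fix one index $i$ and set $F := f_i \circ f_{i-1} \circ \dots \circ f_{i+1}$, the full cyclic composition appearing in \eqref{eq:fixed}. The crucial structural difference from the odd case is parity: $F$ is now a composition of an \emph{even} number of decreasing maps, hence $F$ is \emph{increasing}, so the automatic single-crossing argument of Proposition \ref{prop:unique} no longer applies and existence of a fixed point must be argued separately.

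To produce a fixed point I would apply the intermediate value theorem to $G(x) := F(x) - x$ on $[0,\infty)$. At the left end, $G(0) = F(0) > 0$: by (A3),(A4) each $a_j(x)$ is strictly positive for every finite $x$ (under (A5) its limit is $0$, so positivity is strict throughout), and since $d_{r_j}^{-1}, k_j, d_{p_j}^{-1}$ all send positive inputs to positive outputs by (D2),(D3), every $f_j$ is strictly positive, whence $F(0) > 0$. For the other end, monotonicity gives the uniform bound $F(x) \le f_i(0) < \infty$, so $G(x) < 0$ once $x > f_i(0)$; assumption (A5) sharpens this by forcing $\lim_{x\to\infty} f_j(x) = 0$ and thus pinning $\lim_{x\to\infty} F(x)$ to the finite constant $f_i(f_{i-1}(\cdots f_{i+2}(0)\cdots))$, so $G(x)\to -\infty$. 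Since $G$ is continuous with $G(0)>0$ and is eventually negative, IVT yields some $p_i^* \in (0, f_i(0))$ with $F(p_i^*) = p_i^*$. (Equivalently, $F$ maps the compact interval $[0, f_i(0)]$ into itself; one may instead invoke Brouwer, and the $n$-dimensional version applied to $\Phi(p)=(f_1(p_n),\dots,f_n(p_{n-1}))$ on $\prod_j[0,f_j(0)]$ gives the same conclusion without choosing $i$.)

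Given $p_i^*$, I would reconstruct the remaining coordinates by running the cycle forward, $p_{j}^* := f_{j}(p_{j-1}^*)$, which closes up consistently because $F(p_i^*) = p_i^*$; setting $r_j^* := d_{r_j}^{-1}(a_j(p_{j-1}^*))$ then produces the vector \eqref{eq:steady_state}. Every coordinate is strictly positive — the $p_j^*$ because each $f_j$ is positive, and the $r_j^*$ because $a_j > 0$ and $d_{r_j}^{-1}$ preserves positivity — so the point lies in $\mathbb{R}^{2n}_+$, and by Remark \ref{rem:E_Css} it is a genuine steady state of \eqref{eq:repress}, namely $E_C$. Here (D4) is what guarantees $d_{r_j}^{-1}, d_{p_j}^{-1}$ are differentiable at $0$ (through $d_{r_j}'(0), d_{p_j}'(0) \neq 0$), hence that $F \in C^1$; continuity alone suffices for the existence argument, but the $C^1$ regularity is what makes \eqref{eq:steady_state} a smooth object fit for the stability analysis that follows.

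The main obstacle is precisely this parity-induced loss of a monotone crossing: for odd $n$ a decreasing $F$ with $F(0)>0$ meets the diagonal exactly once, so existence comes for free, whereas for even $n$ the increasing $F$ could a priori sit entirely above or entirely below the diagonal. The work lies in ruling this out, and the resolution is to pair the strict positivity $F(0)>0$ (from (A3),(A4),(A5)) with the a priori bound $F \le f_i(0)$ coming from monotonicity together with the well-definedness granted by (A6); these force $G$ to change sign, which is exactly what IVT (or Brouwer) requires. I would expect the only delicate points to be the bookkeeping of which assumption secures each inequality and confirming that the cyclic reconstruction lands in the open positive orthant rather than on its boundary.
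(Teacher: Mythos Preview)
Your proof is correct, but it takes a genuinely different route from the paper's. The paper does not use the simple upper bound $F(x)\le f_i(0)$; instead it argues through derivatives. Writing out $F'$ via the chain rule, the paper shows $\lim_{x\to\infty} f_{i+1}'(x)=0$ --- which requires (A5), so that $a_{i+1}(x)\to 0$ and hence the arguments of $(d_{r_{i+1}}^{-1})'$ and $(d_{p_{i+1}}^{-1})'$ tend to $0$, and (D4), so that $(d_{r_{i+1}}^{-1})'(0)=1/d_{r_{i+1}}'(0)$ and $(d_{p_{i+1}}^{-1})'(0)=1/d_{p_{i+1}}'(0)$ are finite --- and that the remaining chain-rule factors stay bounded. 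From $F(0)>0$ together with $F'(x)\to 0$ the paper then concludes the graph of $F$ crosses the diagonal. Your boundedness argument sidesteps all of this: once (A6) guarantees $f_i(0)<\infty$, the inequality $F(x)\le f_i(0)$ already forces $G(x)=F(x)-x<0$ for $x>f_i(0)$, and the intermediate value theorem closes the argument with no appeal to (A5) or (D4). As you yourself observe, those two hypotheses play no essential role in your existence proof; in fact your argument delivers the conclusions of both Proposition~\ref{prop:even_ss} and Proposition~\ref{prop:even_ss2} under (A6) alone. The paper's approach is thus more restrictive in its hypotheses, while yours is more elementary and strictly more general.
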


\begin{proof}
We follow the notation used in Proposition \ref{prop:unique} and show that there exists a solution to the fixed-point problem from \eqref{eq:fixed}:
\begin{equation}\label{eq:comp1}
p_i = f_i \circ f_{i-1} \circ \dots \circ f_1 \circ f_n \circ \dots \circ f_{i+1}(p_i). 
\end{equation}
Note that all $f_i$'s in Eqn.\ \eqref{eq:comp1} are well defined by assumption (A6).

In Eqn.\ \eqref{eq:comp1}, we are composing an even number of strictly decreasing functions, so the composition is strictly increasing. We also know that the composition is positive at zero by (A2), (A3), (D2), and (D3).  We will show that 
\begin{equation}
\lim_{x \to \infty} (f_i \circ f_{i-1} \circ \dots \circ f_1 \circ f_n \circ \dots \circ f_{i+1})^{\prime}(x) = 0. \nonumber
\end{equation}
This, along with the composition being positive at zero, will imply that $E_C$ exists. We compute:
\begin{equation}
(f_i \circ f_{i-1} \circ \dots \circ f_1 \circ f_n \circ \dots \circ f_{i+1})^{\prime}  \nonumber
\end{equation}
\begin{equation}
= (f_i^{\prime} \circ f_{i-1} \circ \dots \circ f_1 \circ f_n \circ \dots \circ f_{i+1})(f_{i-1}^{\prime} \circ f_{i-2} \dots \circ f_1 \circ f_n \circ \dots \circ f_{i+1}) \nonumber
\end{equation}
\begin{equation}
(f_{i-2}^{\prime} \circ f_{i-3} \dots \circ f_1 \circ f_n \circ \dots \circ f_{i+1})\cdots f_{i+1}^{\prime}. \nonumber
\end{equation}
First, we show that $\lim_{x \to \infty} f_{i+1}^{\prime}(x) = 0$.  The following calculations are straightforward and follow from (A5), (D2), and (D4):
\begin{align}
f_{i+1}^{\prime}(x) =  ((d_{p_{i+1}}^{-1})^{\prime} \circ k_{i+1} \circ d_{r_{i+1}}^{-1}\circ &a_{i+1}(x))\cdot \nonumber\\ (k_{i+1}^{\prime}\circ d_{r_{i+1}}^{-1}\circ a_{i+1}(x)) &\cdot((d_{r_{i+1}}^{-1})^{\prime} \circ a_{i+1}(x))\cdot a_{i+1}^{\prime}(x),  \label{eq:8}\\ \nonumber \\
\lim_{x \to \infty} (d_{r_{i+1}}^{-1})^{\prime} \circ a_{i+1}(x) &= \lim_{x \to 0} (d_{r_{i+1}}^{-1})^{\prime} (x)\nonumber \\  &= \frac{1}{(d_{r_{i+1}})^{\prime}(0)} < \infty,  \label{eq:9}\\ \nonumber \\
\lim_{x \to \infty} (k_{i+1}^{\prime}\circ d_{r_{i+1}}^{-1}\circ a_{i+1}(x)) &= k_{i+1}^{\prime}(0) < \infty, \\ \nonumber \\
\lim_{x \to \infty} (d_{p_{i+1}}^{-1})^{\prime}\circ k_{i+1} \circ d_{r_{i+1}}^{-1}\circ a_{i+1}(x) &= \lim_{x \to 0} (d_{p_{i+1}}^{-1})^{\prime} (x)\nonumber \\ &= \frac{1}{(d_{p_{i+1}})^{\prime}(0)} < \infty.  \label{eq:10}
\end{align}
It is easy to check that Eqns.\ \eqref{eq:8}-\eqref{eq:10} imply: 
\begin{equation}
\lim_{x \to \infty} f_{i+1}^{\prime}(x) = 0. \nonumber
\end{equation}
Now we show that for $k = i,...,1,n,...,i+2$: 
\begin{equation}
\lim_{x \to \infty} (f_k^{\prime} \circ f_{k-1} \circ \dots \circ f_{i+1})(x) < \infty. \nonumber
\end{equation}
Recall that $f_i = d_{p_i}^{-1} \circ k_i \circ d_{r_i}^{-1}\circ a_i$.  Then, by (A5), 
\begin{equation}
\lim_{x \to \infty} (f_k^{\prime} \circ f_{k-1} \dots \circ f_{i+1})(x) = (f_k^{\prime} \circ f_{k-1} \dots \circ d_{p_{i+1}}^{-1} \circ k_{i+1} \circ d_{r_{i+1}}^{-1})(0) < \infty. \nonumber
\end{equation}
Therefore, 
\begin{align}
&\lim_{x \to \infty} (f_i \circ f_{i-1} \circ \dots \circ f_1 \circ f_n \circ \dots \circ f_{i+1})^{\prime}(x) \nonumber \\
= &\lim_{x \to \infty} (f_i^{\prime} \circ f_{i-1} \circ \dots \circ f_1 \circ f_n \circ \dots \circ f_{i+1}) \cdot \nonumber \\ &\lim_{x \to \infty} (f_{i-1}^{\prime} \circ f_{i-2} \dots \circ f_1 \circ f_n \circ \dots \circ f_{i+1})
\cdots \lim_{x \to \infty} f^{\prime}_{i+1}(x) = 0. \nonumber
\end{align}
Since $i$ was arbitrary, each $p_i$ has a solution, and $E_C$ exists and by Remark \ref{rem:E_Css} is a steady state. 
\end{proof}
   
\begin{proposition}\label{prop:even_ss2}
Consider system \eqref{eq:repress} with $n$ even.  If $\alpha_i>0$ for all $i=1,...,n$ and (A6) holds, then $E_C$ exists and is a steady state.
\end{proposition}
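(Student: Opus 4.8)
The plan is to reuse the reduction to the fixed-point problem from Propositions \ref{prop:unique} and \ref{prop:even_ss}, but to replace the derivative-limit computation with a boundedness argument; this is what makes the hypothesis $\alpha_i>0$ (rather than (A5) and (D4)) the natural one here. As before, setting the right-hand sides of \eqref{eq:repress} to zero reduces the search for a steady state to solving the fixed-point equation \eqref{eq:fixed}, i.e.\ finding $p_i$ with $p_i = F(p_i)$, where $F := f_i \circ f_{i-1} \circ \dots \circ f_1 \circ f_n \circ \dots \circ f_{i+1}$ and $f_j = d_{p_j}^{-1}\circ k_j \circ d_{r_j}^{-1}\circ a_j$. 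Assumption (A6) guarantees that each $f_j$ is well defined, and since $n$ is even, $F$ is a composition of an even number of strictly decreasing functions and hence is continuous and strictly increasing; moreover $F(0)>0$ exactly as argued in Proposition \ref{prop:even_ss}.

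The key step is to exploit $\alpha_i>0$ to bound $F$ uniformly. By (A2) and (A3), each $a_j$ maps $[0,\infty)$ into the bounded interval $(\alpha_j, a_j(0)]$, which by (A6) lies in $(0,\delta_j^R)$, so $d_{r_j}^{-1}$ is defined there; composing with the increasing maps $d_{r_j}^{-1}$, $k_j$, and $d_{p_j}^{-1}$ (all well defined by (A6)) sends $[0,\infty)$ into a compact subinterval of $(0,\infty)$. In particular, since $f_i$ is decreasing and its argument is nonnegative, we get $F(x)\le f_i(0)=:M<\infty$ for every $x\ge 0$, where finiteness of $M = d_{p_i}^{-1}\!\left(k_i\!\left(d_{r_i}^{-1}(a_i(0))\right)\right)$ is precisely what (A6) provides.

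With the bound in hand, existence follows from the intermediate value theorem. Setting $G(x) := F(x)-x$, we have $G(0)=F(0)>0$, while $G(M+1)\le M-(M+1)=-1<0$; since $G$ is continuous, it has a zero $p_i^*\in(0,M+1)$, which solves \eqref{eq:fixed}. As $i$ was arbitrary, each coordinate is obtained this way, and the vector \eqref{eq:steady_state} is a steady state by Remark \ref{rem:E_Css}.

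I expect the only real obstacle to be the bookkeeping in the second paragraph: verifying that positive leakiness together with (A6) keeps every intermediate image inside the domain of the next inverse function, so that $F$ genuinely maps into a bounded set. Once that is secured, the boundedness of $F$ (in contrast to the unbounded identity line) does all the work, and no differentiation—hence no appeal to (D4)—is needed. If one wished to avoid even the mild estimate $G(M+1)<0$, one could instead note that $F$ is increasing and bounded above, hence has a finite limit at infinity, which already forces $F(x)<x$ for all large $x$.
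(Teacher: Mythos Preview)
Your proof is correct, and it follows a genuinely different---and more elementary---route than the paper's. The paper argues exactly as in Proposition~\ref{prop:even_ss}: it shows $\lim_{x\to\infty}F'(x)=0$ by checking that the factors $(d_{r_{i}}^{-1})'\circ a_i(x)$ and $(d_{p_i}^{-1})'\circ k_i\circ d_{r_i}^{-1}\circ a_i(x)$ stay finite as $x\to\infty$, which holds because $\alpha_i>0$ keeps the arguments away from $0$ so that (D3) (rather than (D4)) supplies a nonzero denominator. You instead bypass differentiation entirely by observing that $F(x)\le f_i(0)=d_{p_i}^{-1}(k_i(d_{r_i}^{-1}(a_i(0))))<\infty$, which is finite by (A6) alone, and then applying the intermediate value theorem to $G(x)=F(x)-x$. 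Two remarks: first, the image of each $f_j$ is bounded but not compact (it is half-open at the lower end), a harmless slip since only the upper bound matters; second, your argument never actually uses $\alpha_i>0$---the bound $F\le f_i(0)$ and the well-definedness of each $f_j$ follow from (A6) alone---so your method in fact proves a statement that subsumes both Proposition~\ref{prop:even_ss} and Proposition~\ref{prop:even_ss2} without invoking either (A5) or (D4).
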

\begin{proof}
The proof is similar to the proof of Proposition \ref{prop:even_ss}.  Assumption (A6) implies that the inverses of $d_{r_i}(r_i)$ and $d_{p_i}(p_i)$ exist at $a_i(0)$ for all $i$.  Also, by assuming that $\alpha_i > 0$, both $$\lim_{x \to \infty} (d_{r_i}^{-1})^{\prime} \circ a_{i}(x) \quad \text{and} \quad \lim_{x \to \infty} (d_{p_i}^{-1})^{\prime} \circ k_i \circ d_{r_i}^{-1}\circ a_{i}(x)$$ are finite because $d_{p_i}^{\prime}(\alpha_i), d_{r_i}^{\prime}(\alpha_i) > 0$ by assumption (D3).
\end{proof}

We present a final sufficient condition for when $E_C$ is a steady state in the even-$n$ case.  The condition is motivated by the following example. 

\begin{example}\label{example:even_example2}
Consider the following generalized 2-gene repressilator model:
\begin{equation}
\begin{aligned}
\dot{r}_1 &= \frac{1}{1+p_{2}^2} - r_1^2 \\
\dot{r}_2 &= \frac{1}{1+p_1^2} - r_2^2\\
\dot{p}_1 &= r_1 - p_1^2 \\
\dot{p}_2 &= r_2 - p_2^2. \nonumber
\end{aligned}
\end{equation}
\end{example}
This model fails the assumptions of Proposition \ref{prop:even_ss}, namely (D4), because the derivatives of the degradation-rate functions $d_{p_i} = p_i^2$ at zero are zero, and it fails those of Proposition~\ref{prop:even_ss2} because $\alpha_1 = \alpha_2 = 0$.  Nevertheless, $E_C$ exists and is a steady state, because $E_C$ is the solution to the following system:
\begin{equation}
\begin{aligned}
p_1^4 = \frac{1}{1+p_2^2} \\
p_2^4 = \frac{1}{1+p_1^2}. \nonumber
\end{aligned}
\end{equation}
Finding the fixed point is equivalent to solving: 
\begin{equation}\label{eq:even_motivation}
p^4 = \frac{1}{1+p^2}. 
\end{equation}
The left-hand side of Eqn. \eqref{eq:even_motivation} is zero at zero and increases to $\infty$ while the right-hand side is greater than zero at zero and decreasing, so $E_C$ exists.  This phenomenon leads to our final result about $E_C$ in the even-$n$ case.

\begin{proposition}\label{prop:even_ss3}
Consider system \eqref{eq:repress} with $n$ even or odd.  Assume that all the degradation-rate functions $d_{r_i}$ are equal ($=d_r$), all the degradation-rate functions $d_{p_i}$ are equal ($=d_p$), all the transcription-rate functions $a_i$ are equal ($=a$), and all the translational-rate functions $k_i$ are equal ($=k$).  If $\lim_{x \to \infty} k(x) > \delta^P$, where $\delta^P := \lim_{x \to \infty} d_{p}(x)$, then $E_C$ exists and is a steady state.
\end{proposition}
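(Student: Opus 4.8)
The plan is to exploit the full symmetry of the hypotheses. Since every transcription-, translation-, and degradation-rate function is the same across genes, I would look for the \emph{symmetric} steady state, in which all protein concentrations coincide ($p_1 = \cdots = p_n = p^*$) and all mRNA concentrations coincide ($r_1 = \cdots = r_n = r^*$). For such a point the $2n$ steady-state equations collapse to the single pair $d_r(r^*) = a(p^*)$ and $d_p(p^*) = k(r^*)$, and any solution with $p^*, r^* > 0$ is, upon substitution back into \eqref{eq:repress}, a genuine steady state and, by symmetry, is the central steady state $E_C$. So the whole problem reduces to two scalar equations, and the parity of $n$ plays no role, which is exactly why the statement covers both cases at once.

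First I would eliminate $r^*$ using the \emph{second} equation rather than the first. From $d_p(p^*) = k(r^*)$ I get $r^* = k^{-1}(d_p(p^*))$, and here is where the hypothesis enters: because $d_p(p^*) \le \delta^P < \lim_{x\to\infty} k(x)$, the value $d_p(p^*)$ lies in the range of $k$, so $k^{-1}$ is defined at $d_p(p^*)$ for every $p^* \ge 0$. (Inverting the first equation instead would force $a(p^*)$ to lie in the range of $d_r$ for all $p^*$, i.e.\ $a(0) < \lim_{x\to\infty} d_r(x)$, which is essentially assumption (A6)—precisely the hypothesis we are trying to avoid.) Substituting into the first equation, existence of $E_C$ becomes existence of a positive root of
\begin{equation}
h(p) := d_r\bigl(k^{-1}(d_p(p))\bigr) - a(p). \nonumber
\end{equation}

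Next I would extract the sign information. Writing $g(p) := d_r(k^{-1}(d_p(p)))$, the map $g$ is a composition of strictly increasing functions (by (D3)), hence strictly increasing, and $g(0) = d_r(k^{-1}(d_p(0))) = d_r(k^{-1}(0)) = d_r(0) = 0$ by (D2). Since $a$ is strictly decreasing with $a(0) > 0$ by (A3)--(A4), the difference $h = g - a$ is strictly increasing with $h(0) = -a(0) < 0$. Thus a necessarily unique positive root exists precisely when $h$ eventually turns positive, i.e.\ when $\lim_{p\to\infty} h(p) > 0$. Here $\lim_{p\to\infty} g(p) = d_r\bigl(k^{-1}(\delta^P)\bigr)$, which is finite because $\delta^P < \lim k$ forces $k^{-1}(\delta^P)$ to be finite, while $\lim_{p\to\infty} a(p) = \alpha$.

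I expect the one genuine obstacle to be this final sign change at infinity, namely establishing $d_r\bigl(k^{-1}(\delta^P)\bigr) > \alpha$; everything else is continuity and monotonicity. When the promoter is non-leaky ($\alpha = 0$, as for the Hill function and for the functions in Example \ref{example:even_example2}), this is immediate: $\delta^P > 0$ gives $k^{-1}(\delta^P) > 0$ and hence $d_r(k^{-1}(\delta^P)) > 0 = \alpha$, so the intermediate value theorem delivers $p^*$ and finishes the proof. I would therefore organize the argument around the non-leaky case, recording $d_r(k^{-1}(\delta^P)) > \alpha$ as the precise condition guaranteeing the crossing; indeed a little experimentation suggests that without some such control on the leakiness (beyond $\lim k > \delta^P$ alone) the graphs of $g$ and $a$ need not meet, so pinning down exactly what forces the crossing is the step that warrants the most care.
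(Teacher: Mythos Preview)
Your reduction coincides with the paper's: exploit the symmetry to collapse the $2n$ steady-state equations to the single scalar equation $a(p) = d_r\bigl(k^{-1}(d_p(p))\bigr)$, justify that $k^{-1}\circ d_p$ is well defined via $\lim k > \delta^P$, and then compare the decreasing $a$ against the increasing composite $g := d_r\circ k^{-1}\circ d_p$, noting $a(0) > 0 = g(0)$.

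Where you go beyond the paper is in isolating the crossing at infinity as the real issue, and your suspicion is well founded. The paper's argument stops at ``$a$ decreasing, $g$ increasing, $a(0) > g(0)$, therefore a solution exists,'' but that inference is invalid without checking that $g$ eventually overtakes $a$. Since the hypothesis $\lim k > \delta^P$ forces $\delta^P$ into the range of $k$, the composite $g$ is bounded above by the finite number $d_r\bigl(k^{-1}(\delta^P)\bigr)$, and nothing in the stated hypotheses prevents the leakiness $\alpha$ from exceeding this bound. Concretely, take $a(p) = 1 + (1+p)^{-1}$, $d_p = \arctan$, $k(r) = 10r$, $d_r(r) = r$: all of (A1)--(A4), (D1)--(D3) hold and $\lim k = \infty > \pi/2 = \delta^P$, yet $g(p) = \tfrac{1}{10}\arctan p \le \pi/20 < 1 \le a(p)$ for every $p\ge 0$, so the scalar equation has no solution and $E_C$ does not exist. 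Your plan to organize the proof around the non-leaky case $\alpha = 0$ (which covers Example~\ref{example:even_example2} and the Hill-type functions used throughout the paper) is exactly the right repair, and the inequality $d_r\bigl(k^{-1}(\delta^P)\bigr) > \alpha$ you recorded is precisely the missing hypothesis needed to close the argument in general.
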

\begin{proof}
Under the assumptions of the proposition, it is easy to check that computing $E_C$ reduces to solving
\begin{equation}\label{eq:evenprop}
a(p) = d_r\circ k^{-1} \circ d_p(p)
\end{equation}
for $p \in \mathbb{R}^+$.  The composition $d_r \circ k^{-1} \circ d_p(p)$ is well defined for all $p > 0$ by the assumption that $\lim_{x \to \infty} k(x) > \delta^P$.  Also, the function $a(p)$ is decreasing, while the composition $d_r \circ k^{-1} \circ d_p(p)$ is increasing.  Finally, $a(0) > d_r \circ k^{-1} \circ d_p(0) = 0$ by assumptions (A4) and (D2).  Therefore, there is a solution $p \in \mathbb{R}^+$ to Eqn.\ \eqref{eq:evenprop}, so $E_C$ exists. 
\end{proof}

\begin{remark}
The combinations of assumptions in Propositions \ref{prop:unique}-\ref{prop:even_ss3} used to prove existence of $E_C$ provide insight into possible repressilator design circuits.  For example, a design circuit with a low-copy plasmid and proteins that are signaled for degradation through dimerization with itself could be problematic because the system may not have a steady state.  Likewise, assumption (A6)--used in the proofs of Propositions \ref{prop:unique}-\ref{prop:even_ss2}--requires that the maximal mRNA degradation rate ``overcome" the maximal transcription rate.  We revisit the theme of comparing degradation rates and synthesis rates when we address the stability of steady states in the next section.
\end{remark}

\subsection{Stability Analysis}

For their model, M\"{u}ller \textit{et al.}\ proved general results about the stability of the central steady state by harnessing the fact that the matrix $J - \lambda I$, where $J$ is the Jacobian of system \eqref{eq:repress} at $E_C$, is a circulant matrix.  This matrix representation allowed the eigenvalues to be represented in terms of roots of unity, which in turn allowed for identifying general inequalities in the parameters that characterize stability.  For our generalized repressilator model, however, the matrix $J-\lambda I$ does not reduce to a circulant matrix.  Thus, we use different methods to characterize stability.

We begin with a few definitions.

\begin{definition}\label{def:DK}
Consider the generalized repressilator system \eqref{eq:repress}.  Let $x^* \in \mathbb{R}_+^{2n}$.
\begin{enumerate}
\item The \textbf{$i$-th mRNA degradation rate} at $x^*$ is
\begin{equation}
\partial_i^R = \frac{\mathrm{d}d_{r_i}(r_i)}{\mathrm{d}r_i}\Bigr |_{x^*}. \nonumber
\end{equation}
\item The \textbf{$i$-th protein degradation rate} at $x^*$ is
\begin{equation}
    \partial_i^P = \frac{\mathrm{d}d_{p_i}(p_i)}{\mathrm{d}p_i}\Bigr |_{x^*}. \nonumber
\end{equation}
\item The \textbf{$i$-th degradation product} at $x^*$ is
\begin{equation}
\mathcal{D}_i := \partial_i^R \partial_i^P. \nonumber
\end{equation}
\item The \textbf{total degradation product} at $x^*$ is 
\begin{equation}
\mathcal{D} := \prod_{i=1}^n \mathcal{D}_i, \nonumber
\end{equation}
where $\mathcal{D}_i$ is the $i$-th degradation product at $x^*$.
\item The \textbf{$i$-th synthesis product} at $x^*$ is \begin{equation}
\mathcal{K}_i := \left(\frac{\mathrm{d}k_i(r_i)}{\mathrm{d}r_i}\Bigr |_{x^*}\right) \left(\frac{\mathrm{d}a_i(p_{i-1})}{\mathrm{d}p_{i-1}}\Bigr |_{x^*}\right). \nonumber
\end{equation}
\item The \textbf{total synthesis product} at $x^*$ is 
\begin{equation}
\mathcal{K} := \prod_{i=1}^n \mathcal{K}_i, \nonumber
\end{equation}
where $\mathcal{K}_i$ is the $i$-th synthesis product at $x^*$.
\end{enumerate}
\end{definition}
When $n$ is even, the total synthesis product is positive, because the even number of repression elements in the cycle results in what Mallet-Paret and Smith call a \textit{positive feedback system} \cite{Mallet-Paret1990}.  In the odd-$n$ case, the total synthesis product is negative, because the system is a \textit{negative feedback system}.  These differences play an important role in determining the stability of $E_C$.

Throughout the section, we will refer to the Routh-Hurwitz criterion, so we review it briefly.  Consider a univariate polynomial:
\begin{equation}\label{eq:polynomial}
p(x) = a_n + a_{n-1}x + a_{n-2}x^2 + ... + a_0x^n.
\end{equation}
\begin{definition}\label{def:hurwitz}
For $k = 1,...n$, the \textbf{$k^{th}$ Hurwitz matrix} of $p$ as in \eqref{eq:polynomial} is the $k \times k$ matrix $H_k = [h_{ij}]_{i,j=1}^k$, defined by $h_{ij} = a_{2i-j}$, where $a_{2i-j}$ is defined as 0 if $2i-j<0$ or $2i-j>n$.  
\end{definition}
For example, the fourth Hurwitz matrix of $p(x) = a_4 + a_3x + a_2x^2 + a_1x^3+a_0x^4$ is:
\[
  H_4 =
  \left[ {\begin{array}{cccc}
   a_1 & a_0 & 0 & 0 \cr a_3 & a_2& a_1 &a_0\cr 0 & a_4 & a_3 & a_2 \cr 0 & 0 & 0 & a_4\  \end{array} } \right].
\]
Following the notation in \cite{yang}, we write $D_i = \det(H_i)$. 

\begin{theorem}[Routh-Hurwitz Criterion \cite{allen}]\label{thm:R-H}
Consider a polynomial $p$ as in \eqref{eq:polynomial}.  Every root of $p$ has negative real part if and only if the determinants of all Hurwitz matrices (Definition \ref{def:hurwitz}) are positive, i.e., 
\begin{equation}
D_i > 0, \quad i = 1,2,...,n. \nonumber
\end{equation}
\end{theorem}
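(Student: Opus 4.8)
The plan is to prove the classical Routh--Hurwitz equivalence by interposing two intermediate characterizations between the analytic condition (all roots in the open left half-plane) and the algebraic one ($D_i>0$ for every $i$): first an interlacing condition coming from the argument principle, then a positive-definiteness condition for an associated quadratic form. Throughout I assume the leading coefficient $a_0>0$, which costs nothing since the roots are unchanged under positive scaling; note also that $D_1=a_1$ and, expanding along the last row of the Hurwitz matrix, $D_n=a_n D_{n-1}$, so positivity of the $D_i$ is consistent with the classical necessary condition that all coefficients be positive.

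First I would move to the imaginary axis and apply the argument principle. Writing $p(i\omega)=U(\omega)+i\,V(\omega)$, where $U$ is the even part and $V$ the odd part of $p$ evaluated along $i\mathbb{R}$, the phase $\arg p(i\omega)$ increases by exactly $(Z_--Z_+)\pi=(n-2Z_+)\pi$ as $\omega$ runs from $-\infty$ to $+\infty$, where $Z_\pm$ counts the roots with positive/negative real part; this uses that $p$ is real and, after setting aside the degenerate case, has no zeros on $i\mathbb{R}$. Hence every root has negative real part if and only if this phase increase is maximal, equal to $n\pi$, and by the Hermite--Biehler theorem this is in turn equivalent to the roots of $U$ and $V$ being real, simple, and strictly interlacing with the correct leading-sign pattern. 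This reduces stability to an interlacing statement about two explicit real polynomials.

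Next I would convert interlacing into sign-definiteness. The interlacing of the roots of $U$ and $V$ is equivalent to positive definiteness of the Bezoutian of $U$ and $V$ (equivalently, of the Hankel form built from the corresponding Newton sums), which is the classical Hermite correspondence between real-rooted interlacing pairs and sign-definite symmetric forms. By Sylvester's criterion, this positive definiteness holds precisely when all leading principal minors of that symmetric matrix are positive.

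The crux, and the step I expect to be the main obstacle, is the determinantal identity showing that these leading principal minors agree, up to multiplication by positive powers of $a_0$, with the Hurwitz determinants $D_i$ of Definition~\ref{def:hurwitz}. I would prove this by matching the Euclidean (Routh) reduction of the pair $(U,V)$ against Jacobi's identity for the successive Schur complements of the Hurwitz matrix, checking that each Routh row corresponds to bordering the previous Hurwitz minor; an induction on $n$ keeps the index bookkeeping and the reversed coefficient convention ($a_0$ leading, $a_n$ constant) under control. Granting this identity, the chain ``$D_i>0$ for all $i$'' $\iff$ ``the Hankel/Bezoutian form is positive definite'' $\iff$ ``the roots of $U$ and $V$ interlace'' $\iff$ ``$Z_+=0$'' completes the argument. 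I would close by verifying that the degenerate case, in which some $D_i$ vanishes, corresponds exactly to a root on the imaginary axis, so that strict positivity of all the $D_i$ is genuinely necessary as well as sufficient.
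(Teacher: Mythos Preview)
The paper does not prove Theorem~\ref{thm:R-H}; it quotes the Routh--Hurwitz criterion as a classical result from \cite{allen} and uses it as a black box in the subsequent stability analysis. So there is no proof in the paper for your attempt to be compared against.

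That said, your outline is a recognizable sketch of one of the standard proofs (the Hermite route: argument principle $\Rightarrow$ Hermite--Biehler interlacing $\Rightarrow$ positive-definite Bezoutian/Hankel form $\Rightarrow$ Hurwitz minors via Sylvester). The chain of equivalences is correct in spirit, and you have accurately flagged the genuinely delicate step: the determinantal identity matching the leading principal minors of the Hermite form with the $D_i$. Your proposed method for that step (Euclidean/Routh reduction on $(U,V)$ matched against Schur complements of the Hurwitz matrix, by induction on $n$) is workable but is exactly where a full write-up would require real care with the index conventions; as written it remains a plan rather than a proof. One small looseness: your closing remark that ``some $D_i$ vanishes corresponds exactly to a root on the imaginary axis'' is not literally true in general---vanishing of a Hurwitz minor can also signal a pair of roots symmetric about the origin with nonzero real parts---so the degenerate-case discussion needs to be phrased more carefully (e.g., via Orlando's formula for $D_{n-1}$ or by continuity of the roots in the coefficients). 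For the purposes of this paper, however, none of this is needed: the theorem is imported, not proved.
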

Recall that the stability of a steady state is characterized by negative real parts of the roots of the characteristic polynomial of the Jacobian.  Thus, we apply Theorem \ref{thm:R-H} to this characteristic polynomial to obtain a necessary and sufficient condition for the stability of a steady state (see Theorems \ref{thm:even_stability} and \ref{thm:det_Hurs}).

\subsubsection{Even-$n$ Case}
For system \eqref{eq:Muller_system} with $n$ even, M\"{u}ller \textit{et al.}\ found a condition on the derivative of the transcription-rate function that characterizes when the central steady state is stable.  Here, we generalize that criterion to system \eqref{eq:repress} using $\mathcal{D}$ and $\mathcal{K}$.

\begin{theorem}\label{thm:even_stability}
Consider system \eqref{eq:repress} with $n$ even.   A steady state $x^*$ is locally asymptotically stable if and only if
\begin{equation}\label{eq:inequality}
\mathcal{D} > \mathcal{K}, 
\end{equation}
where $\mathcal{D}$ and $\mathcal{K}$ are evaluated at $x^*$.  
\end{theorem}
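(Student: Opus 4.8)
The plan is to reduce the stability question to locating the roots of a single, explicitly computable polynomial — the characteristic polynomial of the Jacobian $J$ of \eqref{eq:repress} at $x^*$ — and then to exploit the very special form that polynomial takes. First I would write $J$ in the ordering $(r_1,\dots,r_n,p_1,\dots,p_n)$. Because each $\dot r_i$ depends only on $r_i$ and $p_{i-1}$, and each $\dot p_i$ only on $r_i$ and $p_i$, the Jacobian has the $2\times 2$ block form
\[
J = \begin{pmatrix} -D_R & B \\ K & -D_P \end{pmatrix},
\]
where $D_R = \mathrm{diag}(\partial_1^R,\dots,\partial_n^R)$, $D_P = \mathrm{diag}(\partial_1^P,\dots,\partial_n^P)$ and $K = \mathrm{diag}\big(k_i'(r_i^*)\big)$ are diagonal, while $B$ is the cyclic matrix whose only nonzero entries are $B_{i,i-1} = a_i'(p_{i-1}^*)$ (indices mod $n$). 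Since the lower-right block $-D_P-\lambda I$ is diagonal, hence invertible for all but finitely many $\lambda$, I would compute $\det(J-\lambda I)$ by the Schur-complement formula $\det(J-\lambda I) = \det(-D_P-\lambda I)\,\det\!\big[(-D_R-\lambda I) - B(-D_P-\lambda I)^{-1}K\big]$ and extend to all $\lambda$ by continuity.

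The inner matrix is again ``diagonal plus cyclic subdiagonal,'' whose determinant equals $\prod(\text{diagonal}) + (-1)^{n+1}\prod(\text{subdiagonal})$; carrying out this bookkeeping collapses everything to
\[
\det(J-\lambda I) = \prod_{i=1}^n (\lambda + \partial_i^R)(\lambda + \partial_i^P) - \mathcal{K} =: P(\lambda),
\]
which, since $2n$ is even, is exactly the monic characteristic polynomial $\det(\lambda I - J)$. Writing $Q(\lambda) := \prod_{i=1}^n (\lambda + \partial_i^R)(\lambda + \partial_i^P)$, the two facts I would single out are $Q(0) = \prod_i \partial_i^R \partial_i^P = \mathcal{D}$ and, since $n$ is even, $\mathcal{K} > 0$ (the even cycle is a positive-feedback system, as noted above).

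With $P(\lambda) = Q(\lambda) - \mathcal{K}$ in hand the theorem falls out without invoking Routh--Hurwitz. For the \emph{if} direction, assume $\mathcal{D} > \mathcal{K}$ and suppose toward a contradiction that $\lambda = x + iy$ is a root with $x \ge 0$. By (D3) each $\partial_i^R,\partial_i^P \ge 0$, so $|\lambda + \partial_i^R| \ge x + \partial_i^R \ge \partial_i^R$ and likewise for $\partial_i^P$; multiplying these bounds gives $|Q(\lambda)| \ge \prod_i \partial_i^R \partial_i^P = \mathcal{D}$. But $\lambda$ being a root means $Q(\lambda) = \mathcal{K}$, whence $\mathcal{K} = |Q(\lambda)| \ge \mathcal{D}$, contradicting $\mathcal{D} > \mathcal{K}$. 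Hence every eigenvalue has negative real part and $x^*$ is locally asymptotically stable. For the \emph{only if} direction I would argue the contrapositive: if $\mathcal{D} \le \mathcal{K}$ then $P(0) = \mathcal{D}-\mathcal{K} \le 0$, while $P$ is monic of even degree so $P(\lambda)\to+\infty$ as $\lambda\to+\infty$; the intermediate value theorem yields a nonnegative real root (a positive one if $\mathcal{D}<\mathcal{K}$, the root $\lambda=0$ if $\mathcal{D}=\mathcal{K}$), so $x^*$ is not locally asymptotically stable.

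The one place requiring genuine care — the main obstacle — is the determinant computation in the first step: getting the cyclic structure of $B$ right and tracking signs so that the off-diagonal contribution is exactly $-\mathcal{K}$ and the diagonal contribution is exactly $Q(\lambda)$ with $Q(0)=\mathcal{D}$. Once the factorization $P(\lambda)=Q(\lambda)-\mathcal{K}$ is secured, the magnitude bound $|Q(\lambda)|\ge\mathcal{D}$ on the closed right half-plane does all the remaining work. One could alternatively feed $P$ into the Routh--Hurwitz criterion (Theorem \ref{thm:R-H}) to read the condition off its coefficients, but the magnitude argument sidesteps computing $2n$ Hurwitz determinants and makes the role of $\mathcal{D}>\mathcal{K}$ transparent.
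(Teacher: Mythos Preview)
Your argument is correct, and both directions are handled more economically than in the paper. The paper establishes the same characteristic polynomial $P(\lambda)=Q(\lambda)-\mathcal{K}$ (stating it without the Schur-complement derivation you give), but then diverges in method. For the \emph{only if} direction the paper invokes the Routh--Hurwitz criterion, using that the last Hurwitz determinant factors as $D_{2n}=D_{2n-1}\cdot(\mathcal{D}-\mathcal{K})$; your contrapositive via the intermediate value theorem on $P|_{[0,\infty)}$ is more direct and exposes the real nonnegative eigenvalue explicitly. For the \emph{if} direction the paper applies Rouch\'e's theorem on a large right-half-plane semicircle $\Gamma=\gamma_1\cup\gamma_2$, separately bounding $|Q|$ on the arc $\gamma_1$ and on the imaginary segment $\gamma_2$; your single estimate $|Q(\lambda)|\ge\mathcal{D}$ for $\mathrm{Re}\,\lambda\ge 0$ is exactly the paper's bound on $\gamma_2$, but you use it directly to rule out right-half-plane roots rather than feeding it through Rouch\'e, which renders the arc estimate unnecessary. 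What the paper's approach buys is a template that later sections reuse (the Hurwitz machinery reappears in the odd-$n$ analysis of Theorem~\ref{thm:det_Hurs}); what your approach buys is a self-contained proof with no appeal to either Rouch\'e or Routh--Hurwitz.
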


\begin{proof} 
\begin{figure}
\centering
\begin{tikzpicture}[scale=0.50]
\draw[->, thick] (-5,0)--(5,0) node[right]{$Re$};
\draw[->, thick] (0,-5)--(0,5) node[above]{$Im$};
\draw[ultra thick, ->] (0,-3) arc (-90:45:3) node[right]{$\gamma_1$};
\draw[ultra thick] (0,-3) arc (-90:90:3);
\draw[ultra thick, ->] (0,3) -- (0,1/2) node[left]{$\gamma_2$}; 
\draw[ultra thick] (0,1/2) -- (0,-3);
\node[below right=1pt] at (3,0){$R$};
\node[left=1pt] at (0,3){$Ri$};
\node[left=1pt] at (0,-3){$-Ri$};
\end{tikzpicture}
\caption{Contour $\Gamma$ in proof of Theorem \ref{thm:even_stability}.}
\label{Gamma}
\end{figure}
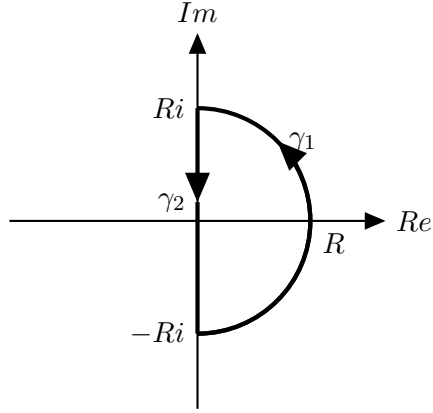
It is easily checked that the characteristic polynomial of the Jacobian matrix of system \eqref{eq:repress} at $x^*$ is
\begin{equation*}
p(\lambda) = \prod_{j=1}^{n}(\lambda+\partial_j^R)(\lambda+\partial_j^P) - \mathcal{K}. 
\end{equation*}
It follows that the constant term of $p$ is $\mathcal{D}-\mathcal{K}$. 

($\implies$)We use the Routh-Hurwitz criterion.  Assume that system is stable at $x^*$.  Then $\det(H_{n-1}) > 0$ and $\det(H_n) > 0$.  However, $\det(H_n) = \det(H_{n-1}) \cdot (\mathcal{D}-\mathcal{K})$ implying that $\mathcal{D}-\mathcal{K} > 0$, i.e., $\mathcal{D}>{\mathcal{K}}$.  

$(\impliedby)$ We use Rouch\'e's Theorem \cite{complex}.  Write $p_1(z) = \prod_{j=1}^{n}(z+\partial_j^R)(z+\partial_j^P)$ and $p_2(z) = \mathcal{K}$.  We will show that the number of zeros of $p(\lambda)$ in the right-hand half plane is equal to the number of zeros of $p_1$ in the right-hand half plane.  Since all $\partial_j$'s are positive by assumption (D3), there are no zeros of $p_1(z)$ in the right-hand half plane, so there are no zeros of $p(\lambda)$.  

Consider the contour described by the semicircle of radius $R$ in the right-hand half plane along with the line segment connecting $-Ri$ and $Ri$ on the imaginary axis.  Call the contour $\Gamma$ (Figure \ref{Gamma}).  We separate $\Gamma$ into the semicircle, $\gamma_1$, and the line, $\gamma_2$.  This is a closed contour in the complex plane.  First, we show that $|p_1(z)| > |p_2(z)|$ on $\gamma_1$.  We write $z = Re^{i\theta}$ on $\gamma_1$.  Then 
\begin{equation*}
|p_1(z)| = |p_1(Re^{i\theta})| = \prod_{j=1}^n|Re^{i\theta} + \partial_j^R||Re^{i\theta} + \partial_j^P| \geq \prod_{j=1}^n|R - \partial_j^R||R - \partial_j^P|,  
\end{equation*}
by the reverse triangle inequality.  Call $d$ the maximum of the degradation constants.  Then
\begin{equation*}
\prod_{j=1}^n|R - \partial_j^R||R - \partial_j^P| \geq \prod_{j=1}^{2n}|R - d|. \nonumber
\end{equation*}
Let $R^{\prime} = 2d+1$. Then for all $R \geq R^{\prime}$, 
\begin{equation*}
\prod_{j=1}^{2n}|R - d| > d^{2n} \geq \mathcal{D}.
\end{equation*}
Therefore, for contours $\Gamma$ with a sufficiently large radius, by assumption \eqref{eq:inequality}, the following inequalities hold on $\gamma_1$:
\begin{equation}
|p_1(z)| > \mathcal{D} > |\mathcal{K}| = |p_2(z)|. \nonumber
\end{equation}
Now all that is left to show is that $|p_1(z)|>|p_2(z)|$ on $\gamma_2$.  On $\gamma_2$, we write $z = iy$ for $-R<y<R$.  Then
\begin{equation}
|p_1(z)| = \prod_{j=1}^n|iy+\partial_j^R||iy+\partial_j^P| \geq \prod_{j=1}^n|Re(iy+\partial_j^R)||Re(iy+\partial_j^P)| = \mathcal{D}. \nonumber
\end{equation}
Therefore, again by assumption \eqref{eq:inequality}, the following holds on $\gamma_2$:
\begin{equation}
|p_1(z)| > \mathcal{D} > \mathcal{K} = |p_2(z)|. \nonumber
\end{equation}
The number of zeros of $p_1(z) + p_2(z)$ in $\Gamma$ is the same as the number of zeros of $p_1(z)$ in $\Gamma$ for all $R \geq R^{\prime}$.  Since $\partial_i^R, \partial_i^P> 0$ for all $i$, we know that there are no zeros of $p(\lambda)$ inside $\Gamma$ for all $R\geq R^{\prime}$.  Therefore, there are no eigenvalues of the Jacobian with positive or zero real part, so the system is stable. 
\end{proof}

Theorem \ref{thm:even_stability} has the following biological interpretation.  Inequality \eqref{eq:inequality} says that, in the long term, degradation is a more powerful process than synthesis.  Thus, system \eqref{eq:repress} converges locally if and only if degradation is stronger than the combined synthesis of mRNA and protein.

\subsubsection{Odd-$n$ Case}

Recall that, in Proposition \ref{prop:unique}, we proved $E_C$ always exists and is unique when $n$ is odd. Below, we prove results towards finding a necessary and sufficient condition for stability of $E_C$ in the odd-$n$ case like we have in the even case from Theorem \ref{thm:even_stability}.  Our proofs use Hurwitz matrices because the inherent structure of the system when $n$ is odd allows us to simplify the Routh-Hurwitz criterion.  Towards the end of the section, we conjecture a necessary and sufficient condition for stability of $E_C$ and then give evidence for it.

First, we discuss why the proof of Theorem \ref{thm:even_stability} does not generalize to the odd-$n$ case.  Recall that, in this case, system \eqref{eq:repress} is a negative feedback loop and $\mathcal{K} < 0$, while in the even case, $\mathcal{K} > 0$.  Thus, in the odd case, $\mathcal{D} > \mathcal{K}$ always holds, not only when the system is stable.  Also, even though $\mathcal{D}>0>\mathcal{K}$, we are not guaranteed that 
\begin{equation}\label{eq:odd_ineq}
\mathcal{D} > |\mathcal{K}|, 
\end{equation}
which is what we used in the proof of Theorem \ref{thm:even_stability}.  If inequality \eqref{eq:odd_ineq} does hold, however, we conclude that the system is stable at $E_C$.

\begin{proposition}\label{prop:odd_stab}
Consider system \eqref{eq:repress} with $n$ odd.  If inequality \eqref{eq:odd_ineq} holds, then $E_C$ is locally asymptotically stable.
\end{proposition}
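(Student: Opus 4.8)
The plan is to reprise the Rouch\'e-theorem argument from the $(\impliedby)$ direction of Theorem \ref{thm:even_stability}, which transfers almost verbatim once we track the sign of $\mathcal{K}$. The key observation is that the derivation of the characteristic polynomial of the Jacobian at $x^*$ does not use the parity of $n$, so we again have
\[
p(\lambda) = \prod_{j=1}^n (\lambda + \partial_j^R)(\lambda + \partial_j^P) - \mathcal{K}.
\]
The only structural difference from the even case is that now $\mathcal{K} < 0$ (negative feedback), so the relevant comparison is with $|\mathcal{K}| = -\mathcal{K}$, which is precisely the quantity appearing in the hypothesis \eqref{eq:odd_ineq}.

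Concretely, I would set $p_1(z) = \prod_{j=1}^n (z + \partial_j^R)(z + \partial_j^P)$ and treat the constant $-\mathcal{K}$ as the perturbation $g$, so that $p = p_1 + g$ and $|g| = |\mathcal{K}|$. On the same contour $\Gamma$ (Figure \ref{Gamma}) I would verify $|p_1(z)| > |g(z)|$ arc by arc. On the semicircle $\gamma_1$ the reverse triangle inequality gives, for $R$ large enough, $|p_1(z)| > d^{2n} \geq \mathcal{D}$ exactly as before, where $d$ is the largest degradation constant. On the imaginary segment $\gamma_2$, writing $z = iy$ and bounding each factor below by its real part yields $|p_1(iy)| \geq \prod_{j=1}^n \partial_j^R \partial_j^P = \mathcal{D}$. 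In both cases the hypothesis $\mathcal{D} > |\mathcal{K}|$ supplies the strict inequality $|p_1| > \mathcal{D} > |\mathcal{K}| = |g|$ needed for Rouch\'e's theorem.

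It then follows that $p$ and $p_1$ have the same number of zeros inside $\Gamma$ for every $R \geq R'$. Since all roots of $p_1$ are $-\partial_j^R$ and $-\partial_j^P$, which are strictly negative by (D3), $p_1$ has no zeros in the closed right-hand half plane, hence neither does $p$; letting $R \to \infty$ shows every eigenvalue of the Jacobian has negative real part, so $E_C$ is locally asymptotically stable. The point to emphasize is less an obstacle than a limitation: unlike the even case, this argument yields only a sufficient condition, because (as already noted in the text) $\mathcal{D} > \mathcal{K}$ holds automatically when $\mathcal{K} < 0$, so the Routh--Hurwitz $(\implies)$ half of Theorem \ref{thm:even_stability} cannot be reversed here to recover necessity. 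The genuinely hard part --- a sharp necessary-and-sufficient criterion --- is exactly what the later conjecture must address; this proposition captures only the easy regime in which the feedback strength $|\mathcal{K}|$ is dominated by the total degradation product $\mathcal{D}$.
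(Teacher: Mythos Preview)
Your proposal is correct and is exactly the approach the paper takes: the paper's proof of Proposition~\ref{prop:odd_stab} simply says ``The proof is the same as in the backwards direction of Theorem~\ref{thm:even_stability},'' and you have reproduced that Rouch\'e argument with the only needed adjustment, namely replacing $\mathcal{K}$ by $|\mathcal{K}|$ since $\mathcal{K}<0$ in the odd case. Your added commentary about why only sufficiency survives is accurate and matches the paper's discussion preceding the proposition.
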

\begin{proof}
The proof is the same as in the backwards direction of Theorem \ref{thm:even_stability}.
\end{proof}

We continue to solve the question of stability at $E_C$ by using the structure of the system to reduce the number of Hurwitz matrices needed in the Routh-Hurwitz criterion.  The idea is that the characteristic polynomial of the system is close to a polynomial that is known to have all negative real roots and so we will need to check fewer Hurwitz determinants.

\begin{theorem}\label{thm:det_Hurs}
Consider system \eqref{eq:repress} with $n$ odd, and let $D_i$ denote the determinant of the $i$-th Hurwitz matrix of the Jacobian at $E_C$.  Then $E_C$ is locally stable if and only if $D_i > 0 $ for all $i=n+2,\dots,2n-1$.  
\end{theorem}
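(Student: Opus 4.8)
The plan is to apply the Routh--Hurwitz criterion (Theorem \ref{thm:R-H}) to the characteristic polynomial
\[
p(\lambda) = \prod_{j=1}^{n}(\lambda+\partial_j^R)(\lambda+\partial_j^P) - \mathcal{K},
\]
but to avoid checking all $2n$ Hurwitz determinants by exploiting that $p$ is a perturbation, in its constant term only, of the auxiliary polynomial $g(\lambda) := \prod_{j=1}^{n}(\lambda+\partial_j^R)(\lambda+\partial_j^P)$. Every root of $g$ equals some $-\partial_j^R<0$ or $-\partial_j^P<0$ (assumption (D3) evaluated at $E_C$), so $g$ is Hurwitz stable and all its Hurwitz determinants $D_k^{(g)}$ are positive. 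Writing $p(\lambda)=\lambda^{2n}+b_1\lambda^{2n-1}+\dots+b_{2n}$, the coefficients of $p$ and $g$ agree except in the constant term, where $b_{2n}^{(p)}-b_{2n}^{(g)}=-\mathcal{K}=|\mathcal{K}|>0$, using that $\mathcal{K}<0$ in the odd-$n$ (negative feedback) case.

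First I would locate where the constant term $b_{2n}$ occurs in the Hurwitz matrices $H_k=[b_{2i-j}]$. The entry $b_{2n}$ sits at positions with $2i-j=2n$, which forces $i\ge n+1$; hence $b_{2n}$ never appears in $H_k$ for $k\le n$, so $D_k=D_k^{(g)}>0$ automatically for $k=1,\dots,n$. At the other end, the standard identity $D_{2n}=b_{2n}\,D_{2n-1}$ together with $b_{2n}=\mathcal{D}-\mathcal{K}>0$ gives $D_{2n}>0$ if and only if $D_{2n-1}>0$; this is the same reduction used in the even-$n$ proof of Theorem \ref{thm:even_stability}. Thus the only determinants not already pinned down are $D_{n+1},\dots,D_{2n-1}$, and it remains to force $D_{n+1}>0$, which is the heart of the argument.

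For $D_{n+1}$ I would observe that $b_{2n}$ occupies exactly one entry of $H_{n+1}$, namely position $(n+1,2)$, so $D_{n+1}$ is affine-linear in $b_{2n}$ with slope the cofactor $(-1)^{(n+1)+2}M_{n+1,2}$, where $M_{n+1,2}$ deletes row $n+1$ and column $2$. Since $n$ is odd this sign is $+1$, so it suffices to show $M_{n+1,2}>0$. The first row of this minor is $(b_1,0,\dots,0)$, so expanding along it yields $M_{n+1,2}=b_1\,\det(\text{sub-block})$; reindexing $(i,j)\mapsto(i-1,j-2)$ identifies the sub-block as the Hurwitz matrix $H_{n-1}$ of $g$, whence $M_{n+1,2}=b_1\,D_{n-1}$. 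As $b_1=\sum_j(\partial_j^R+\partial_j^P)>0$ and $D_{n-1}=D_{n-1}^{(g)}>0$, we get $D_{n+1}=D_{n+1}^{(g)}+b_1 D_{n-1}|\mathcal{K}|>0$. Combining everything, $D_1,\dots,D_{n+1}$ are positive automatically and $D_{2n}>0$ once $D_{2n-1}>0$, so by Theorem \ref{thm:R-H} local stability is equivalent to positivity of the remaining determinants $D_{n+2},\dots,D_{2n-1}$.

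I expect the main obstacle to be the cofactor computation for $D_{n+1}$: confirming that $b_{2n}$ occupies a single entry, fixing the cofactor sign for odd $n$, and recognizing the residual minor as $b_1 D_{n-1}$. The index bookkeeping (which coefficient lands in which Hurwitz-matrix slot) and the degenerate case $n=1$ --- where the checked range $n+2,\dots,2n-1$ is empty and $E_C$ is automatically stable because $p$ is a degree-two polynomial with positive coefficients --- also require care.
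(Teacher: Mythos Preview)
Your proposal is correct and follows essentially the same approach as the paper's proof: both compare the Hurwitz matrices of $p(\lambda)$ against those of the auxiliary polynomial $q(\lambda)=\prod_j(\lambda+\partial_j^R)(\lambda+\partial_j^P)$, observe that $D_k=D_k^{(q)}$ for $k\le n$ because the constant term is absent there, handle $D_{n+1}$ by isolating the single appearance of the constant term at entry $(n+1,2)$ and identifying the resulting minor as $a_1\det(H_{n-1})$, and finish with $D_{2n}=(\mathcal{D}-\mathcal{K})D_{2n-1}$. Your cofactor/affine-linearity phrasing is a slight repackaging of the paper's expansion along the last row, and your explicit remark on the empty range when $n=1$ is a nice addition the paper leaves implicit.
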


\begin{proof}
We first show that, when $n$ is odd, the first $n+1$ Hurwitz matrices calculated from the characteristic polynomial of the Jacobian at $E_C$ always have positive determinant.

Recall from the proof of Theorem \ref{thm:even_stability} that the characteristic polynomial of the Jacobian matrix at $E_C$ is $p(\lambda) = \prod_{i=1}^n (\lambda+\partial_i^R)(\lambda+\partial_i^P) - \mathcal{K}$, where $\mathcal{K}$ is the total synthesis product from Definition \ref{def:DK}.  Since $n$ is odd and so $\mathcal{K} < 0$, we rewrite this as $p(\lambda) = \prod_{i=1}^n (\lambda+\partial_i^R)(\lambda+\partial_i^P) + |\mathcal{K}|$.  We introduce a new polynomial $q(\lambda) = \prod_{i=1}^n (\lambda+\partial_i^R)(\lambda+\partial_i^P)$.  

In what follows, any quantity with a superscript $p$ is constructed using $p(\lambda)$, and similarly for $q(\lambda)$.  Notice that $p(\lambda)$ and $q(\lambda)$ both have degree $2n$, so there are $2n$ Hurwitz matrices $H_i^p$ for $p(\lambda)$ and $H_i^q$ for $q(\lambda)$.  Also, all coefficients of $p(\lambda)$ and $q(\lambda)$ match except for the constant term.  Therefore, every Hurwitz matrix constructed using only coefficients of $p(\lambda)$ that are not the constant term is equivalent to the corresponding Hurwitz matrix of $q(\lambda)$.  We will use this fact below.

We now split the proof into two cases.
\begin{enumerate}
\item Case 1: $i = 1,..., n$.

From Definition \ref{def:hurwitz}, the coefficients of the polynomial that appear in $H_i$ are indexed by $1,..., 2i-1$.  Therefore, $H_i^p = H_i^q$ for $i=1,...,n$, so $d_{p_i} > 0$ for $i=1,...,n$ because all roots of $q(\lambda)$ have negative real part.

\item Case 2: $i = n+1$. 

For this case, we examine the effect of the constant term of $p$ on the determinant of $H^p_{n+1}$. Recall that $a_{2n}^p = a_{2n}^q + |\mathcal{K}|$ where $a_{2n}^p$ and $a_{2n}^q$ are the constant terms of $p$ and $q$, respectively.  

Below, we use $A^{[a,b]}$ to denote the matrix $A$ without row-$a$ and column-$b$.  The $(n+1)$st Hurwitz matrix of $p$ is the following $(n+1) \times (n+1)$ matrix:

\[
  H_{n+1}^p =
  \left[ {\begin{array}{cccccc}
   a_1 & a_0 & 0 & 0&\dots & 0 \cr a_3 & a_2& a_1 &a_0 & \dots & 0\cr a_5 & a_4 & a_3 & a_2 & a_1 & a_0 \cr \vdots & \vdots & \vdots & \vdots & \vdots & \vdots \cr 0& a^p_{2n} & a_{2n-1} & \dots & a_{n+2} & a_{n+1}\  \end{array} } \right],
\]
and $H^q_{n+1}$ matches $H^p_{n+1}$ at all entries except for entry $(n+1,2)$, where it is the constant term $a_{2n}^q$ rather than that of $p$.  We compute $D_{n+1}^p = \det(H_{n+1}^p)$ and $D_{n+1}^q = \det(H_{n+1}^q)$ by expanding along the last row:
\begin{align}
D_{n+1}^p &= a_{2n}^p\det(H_{n+1}^{p,[n+1,2]})-a_{2n-1}\det(H_{n+1}^{p,[n+1,3]})+\dots \nonumber \\
&+ a_{n+1}\det(H_{n+1}^{p,[n+1,n+1]}),  \label{eq:Dp}
\end{align}
and
\begin{align}
D_{n+1}^q &= a_{2n}^q\det(H_{n+1}^{q,[n+1,2]})-a_{2n-1}\det(H_{n+1}^{q,[n+1,3]})+\dots \nonumber \\
&+ a_{n+1}\det(H_{n+1}^{q,[n+1,n+1]}).  \label{eq:Dq}
\end{align}
As the constant term is present only in the last row of $H_{n+1}$, the submatrices of $H_{n+1}^p$ and $H_{n+1}^q$ that exclude that row are equal.  Combining this fact with Eqns. \eqref{eq:Dp} and \eqref{eq:Dq} gives
\begin{equation}\label{eq:DpminusDq}
\begin{aligned}
D_{n+1}^p - D_{n+1}^q &= a_{2n}^p\det(H_{n+1}^{p,[n+1,2]}) - a_{2n}^q\det(H_{n+1}^{q,[n+1,2]}) \\
&= (a_{2n}^p-a_{2n}^q)\det(H_{n+1}^{p,[n+1,2]}) = |\mathcal{K}|\det(H_{n+1}^{p,[n+1,2]}).
\end{aligned}
\end{equation}

To compute the determinant of the following matrix:
\[
  H^{p,[n+1,2]}_{n+1} =
  \left[ {\begin{array}{ccccc}
   a_1  & 0 & 0&\dots & 0 \cr a_3 & a_1 &a_0 & \dots & 0\cr a_5 & a_3 & a_2 & a_1 & a_0 \cr \vdots  & \vdots & \vdots & \vdots & \vdots \cr a_{2n-1}& a_{2n-3} & a_{2n-4} & \dots & a_{n-1} \  \end{array} } \right]
\]
we expand about the first row, so $\det(H_{n+1}^{p,[n+1,2]}) = a_1\det(A)$, where 

\[
  A =
  \left[ {\begin{array}{cccccccc}
   a_1  & a_0 & 0 & 0 & 0&0&\dots & 0 \cr a_3 & a_2 &a_1 & a_0 & 0 &0&\dots & 0\cr a_5 & a_4 & a_3 & a_2 & a_1 & a_0 & \dots & 0\cr \vdots  & \vdots & \vdots & \vdots & \ddots &\vdots & \ddots & \vdots \cr a_{2n-3}& a_{2n-4} & a_{2n-5} & a_{2n-6} & \dots  & a_{n+1}&a_{n} & a_{n-1} \  \end{array} } \right].
\]
Notice that $A = H_{n-1}^p = H_{n-1}^q$, which has positive determinant by Case 1.  Therefore, because $a_1 >0$, $\det(H_{n+1}^{p,[n+1,2]}) = a_1det(A) > 0$.  Since all roots of $q(\lambda)$ have negative real part, $D_{n+1}^q > 0$ by Theorem \ref{thm:R-H}, so Eqn. \eqref{eq:DpminusDq} gives
\begin{equation}
D_{n+1}^p - D_{n+1}^q = |\mathcal{K}|\det(H_{n+1}^{p,[n+1,2]}) > 0 \implies D_{n+1}^p > D_{n+1}^q > 0. \nonumber 
\end{equation}

\end{enumerate}
Therefore, $D_{n+1}^p > 0$ and so the first $n+1$ determinants of the Hurwitz matrices constructed from $p(\lambda)$ are positive.  

Since $D_{2n} = (\mathcal{D} - \mathcal{K})D_{2n-1}$ and $\mathcal{D} - \mathcal{K} > 0$ (as explained above Proposition~\ref{prop:odd_stab}), we conclude from Theorem \ref{thm:R-H} that $E_C$ is locally stable if and only if $D_i > 0 $ for all $i=n+2,\dots,2n-1$. 
\end{proof}

\begin{corollary}\label{cor:Cor_1}
For $n=3$, system \eqref{eq:repress} is stable at $E_C$ if and only if $D_5 > 0$.  
\end{corollary}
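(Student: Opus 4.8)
The plan is to read off the corollary as the $n=3$ specialization of Theorem~\ref{thm:det_Hurs}. That theorem states that, for odd $n$, the steady state $E_C$ is locally stable if and only if $D_i > 0$ for every index $i$ in the range $n+2, \dots, 2n-1$. So the only task is to determine what this index range becomes when $n=3$.

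Setting $n=3$, the characteristic polynomial of the Jacobian at $E_C$ has degree $2n = 6$, so there are six Hurwitz determinants $D_1, \dots, D_6$. The lower endpoint of the relevant range is $n+2 = 5$ and the upper endpoint is $2n-1 = 5$; hence the range consists of the single index $i = 5$. Therefore Theorem~\ref{thm:det_Hurs} asserts precisely that $E_C$ is locally stable if and only if $D_5 > 0$, which is the statement of the corollary.

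For transparency I would also recall why no other determinant needs to be checked, since the proof of Theorem~\ref{thm:det_Hurs} already supplies the reasons: the first $n+1 = 4$ determinants $D_1, \dots, D_4$ are automatically positive (the auxiliary polynomial $q(\lambda)$ has all its roots in the open left half-plane, so its Hurwitz determinants through index $n+1$ are positive and agree with those of $p$), and the top determinant satisfies $D_6 = (\mathcal{D} - \mathcal{K}) D_5$ with $\mathcal{D} - \mathcal{K} > 0$ in the odd-$n$ case, so $D_6$ inherits its sign from $D_5$. This leaves $D_5 > 0$ as the single governing inequality.

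There is essentially no obstacle here: the result is a direct corollary, and the only point requiring care is confirming that the index range $\{n+2, \dots, 2n-1\}$ degenerates to the single value $5$ when $n=3$, rather than being empty or containing additional indices.
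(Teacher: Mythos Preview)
Your proposal is correct and matches the paper's approach exactly: the paper's proof simply says ``Follows immediately from Theorem~\ref{thm:det_Hurs},'' and you have spelled out the specialization $n+2 = 2n-1 = 5$ that makes this immediate. The additional paragraph recalling why $D_1,\dots,D_4$ and $D_6$ need not be checked is accurate but not required, since those facts are already contained in the proof of Theorem~\ref{thm:det_Hurs}.
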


\begin{proof}
Follows immediately from Theorem \ref{thm:det_Hurs}.
\end{proof}

Next, we recall the stability condition for $E_C$ due to M\"uller \textit{et al.}\ and compare it to the one in Theorem \ref{thm:det_Hurs}.  M\"uller \textit{et al.}'s\ criterion \cite{Muller2006} is:

\begin{equation}\label{eq:Muller_stable}
\frac{\beta}{(1+\beta)^2} < \frac{1-S_c\cos(\pi/n)}{S_c^2\sin^2(\pi/n)},  
\end{equation}
where
\begin{equation}
S_c = -\alpha f^{\prime}(E_C).  
\end{equation}
In system \eqref{eq:repress}, it is easy to see that $S_c$ equals $-\mathcal{K}_i$ at $E_C$.  Therefore, we rewrite Eqn.\ \eqref{eq:Muller_stable} as:
\begin{equation}\label{adj_eq:Muller_stable}
\frac{\beta}{(1+\beta)^2} < \frac{1+\mathcal{K}_i\cos(\pi/n)}{\mathcal{K}_i^2\sin^2(\pi/n)}.
\end{equation}

For $n=3$, it is straightforward to check that inequality \eqref{adj_eq:Muller_stable} is equivalent to:

\begin{equation}\label{eq:rearranged1}
(4+2\mathcal{K}_i)(1+\beta)^2-3\beta\mathcal{K}_i^2 > 0. 
\end{equation}

For system \eqref{eq:Muller_system} with $n=3$, by Corollary~\ref{cor:Cor_1}, the condition $D_5>0$ characterizes the same stability region in parameter space as inequality~\eqref{eq:rearranged1}.  This is surprising because $D_5$ under system \eqref{eq:Muller_system} and $n=3$ is a more complicated expression than the left-hand side in \eqref{eq:rearranged1}:

\begin{align}\label{eq:g}
D_5 &=\beta^2(8\beta^{10}\mathcal{K}_i^3+64\beta^{10}+144\beta^9\mathcal{K}_i^3 + 576\beta^9 +792\beta^8\mathcal{K}_i^3+2304\beta^8-27\beta^7\mathcal{K}_i^6 \nonumber \\
&+2184\beta^7\mathcal{K}_i^3+5376\beta^7-81\beta^6\mathcal{K}_i^6 
+3528\beta^6\mathcal{K}_i^3+8064\beta^6-81\beta^5\mathcal{K}_i^6 \nonumber\\
&+3528\beta^5\mathcal{K}_i^3+8064\beta^5-27\beta^4\mathcal{K}_i^6+2184\beta^4\mathcal{K}_i^3 +5376\beta^4+792\beta^3\mathcal{K}_i^3   \nonumber \\
&+2304\beta^3+144\beta^2\mathcal{K}_i^3+576\beta^2+8\beta\mathcal{K}_i^3+64\beta). 
\end{align}

Next, we prove directly that these two inequalities define the same stability region when $\beta \in \mathbb{R}_{>0}$ and $\mathcal{K}_i \in \mathbb{R}$.  Note that, by definition, $\mathcal{K}_i$ is always negative, but we show that even for $\mathcal{K}_i \in \mathbb{R}$ the two inequalities are equivalent. 

\begin{theorem}[Equivalence of the $n=3$ stability conditions]
For $n=3$ of system \eqref{eq:Muller_system}, inequality \eqref{eq:rearranged1} holds for $\beta \in \mathbb{R}_{>0}$ and $\mathcal{K}_i \in \mathbb{R}$ if and only if $D_5>0$, where $D_5$ is the determinant of the Hurwitz matrix $H_5$ of the characteristic polynomial of the Jacobian matrix of \eqref{eq:Muller_system} evaluated at $E_C$. 
\end{theorem}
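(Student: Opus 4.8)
The plan is to show that $D_5$ and the left-hand side of \eqref{eq:rearranged1} have the same sign for every $\beta>0$ and $\mathcal{K}_i\in\mathbb{R}$; the equivalence of the two inequalities is then immediate. Write $g:=(4+2\mathcal{K}_i)(1+\beta)^2-3\beta\mathcal{K}_i^2$ for the left-hand side of \eqref{eq:rearranged1}. First I would strip off the manifestly positive powers of $\beta$ from \eqref{eq:g}: every monomial there carries a factor $\beta^3$, so $D_5=\beta^3\,\tilde P$ with $\text{sign}(D_5)=\text{sign}(\tilde P)$, where $\tilde P$ is a polynomial whose only powers of $\mathcal{K}_i$ are $0,3,6$. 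Collecting these from \eqref{eq:g}, I would identify the coefficients as perfect powers: the $\mathcal{K}_i^6$-coefficient is $-27\beta^3(\beta+1)^3$, the constant term is $64(\beta+1)^9$ (the integers $64,576,\dots,64$ are $64\binom{9}{k}$), and the $\mathcal{K}_i^3$-coefficient is $8G(\beta)$ for an explicit palindromic $G$ of degree $9$.

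The key step — and the surprising one, since $\tilde P$ involves only $\mathcal{K}_i^{0},\mathcal{K}_i^{3},\mathcal{K}_i^{6}$ while $g$ genuinely involves $\mathcal{K}_i^{1}$ and $\mathcal{K}_i^{2}$ — is the factorization $\tilde P=g\cdot R$. I would obtain $R$ by dividing $\tilde P$ by $g$ in the variable $\mathcal{K}_i$, producing
\[
R=9\beta^2(\beta+1)^3\mathcal{K}_i^4+6\beta(\beta+1)^5\mathcal{K}_i^3+4(\beta+1)^5(\beta^2+5\beta+1)\mathcal{K}_i^2-8(\beta+1)^7\mathcal{K}_i+16(\beta+1)^7 .
\]
That the division leaves no remainder — equivalently, that $g\mid\tilde P$ — reduces to the single polynomial identity $G(\beta)-(\beta+1)^7(\beta^2+8\beta+1)=3\beta(\beta+1)^7$, which simultaneously forces the linear coefficient of $R$ to collapse to $-8(\beta+1)^7$ and certifies that the remainder vanishes. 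This identity is the heart of the argument, and I expect verifying it (together with the $\mathcal{K}_i^2$- and constant-coefficient consistency checks) to be the main obstacle: it is precisely the cancellation that lets the genuinely quadratic-in-$\mathcal{K}_i$ polynomial $g$ divide one lying in the subring $\mathbb{R}[\beta][\mathcal{K}_i^3]$.

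It then remains to prove $R>0$ for all $\beta>0$ and $\mathcal{K}_i\in\mathbb{R}$, which upgrades $\text{sign}(\tilde P)=\text{sign}(g)$ to a genuine equivalence. I would complete the square on the top two terms, using $\tfrac{(6\beta(\beta+1)^5)^2}{4\cdot 9\beta^2(\beta+1)^3}=(\beta+1)^7$, to write
\[
R=9\beta^2(\beta+1)^3\Bigl(\mathcal{K}_i^2+\tfrac{(\beta+1)^2}{3\beta}\mathcal{K}_i\Bigr)^2+3(\beta+1)^5(\beta^2+6\beta+1)\mathcal{K}_i^2-8(\beta+1)^7\mathcal{K}_i+16(\beta+1)^7 .
\]
The trailing quadratic in $\mathcal{K}_i$ has positive leading coefficient (as $\beta^2+6\beta+1>0$ for $\beta>0$) and discriminant $64(\beta+1)^{14}-192(\beta+1)^{12}(\beta^2+6\beta+1)=-128(\beta+1)^{12}(\beta^2+8\beta+1)<0$, so it is strictly positive; being a square plus a strictly positive quantity, $R>0$. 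Combining the three displays gives $\text{sign}(D_5)=\text{sign}(\beta^3)\,\text{sign}(g\,R)=\text{sign}(g)$, whence $D_5>0$ if and only if $g>0$, i.e.\ if and only if \eqref{eq:rearranged1} holds — and nothing in the argument used $\mathcal{K}_i<0$, so the equivalence is valid for all $\mathcal{K}_i\in\mathbb{R}$. Throughout I would keep every computation as an exact factored identity in $\mathbb{Z}[\beta,\mathcal{K}_i]$, so the verification is a finite symbolic check rather than a numeric one.
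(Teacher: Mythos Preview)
Your argument is correct and shares with the paper its central step: the factorization of $D_5$ by the quadratic $g=(4+2\mathcal{K}_i)(1+\beta)^2-3\beta\mathcal{K}_i^2$. The paper records the cofactor (their $h$, your $\beta^3 R$) as an expanded polynomial obtained via {\tt Maple}; you instead express $R$ in closed factored form $9\beta^2(\beta+1)^3\mathcal{K}_i^4+6\beta(\beta+1)^5\mathcal{K}_i^3+4(\beta+1)^5(\beta^2+5\beta+1)\mathcal{K}_i^2-8(\beta+1)^7\mathcal{K}_i+16(\beta+1)^7$, which makes the subsequent positivity argument tractable by hand.

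Where the two proofs genuinely diverge is in how sign agreement is established. The paper never proves the cofactor positive. Instead, for each fixed $\tilde\beta>0$ it observes that $D_5$, as a degree-$6$ polynomial in $\mathcal{K}_i$, has a single sign change and hence (Descartes) at most one positive and one negative real root; since $g$ already supplies two real roots of $D_5$ (one of each sign, being a downward parabola with positive $y$-intercept), these exhaust the real roots of $D_5$, and comparing leading coefficients shows the sign regions coincide. Your route is more direct: you complete the square on the top two terms of $R$ and then show the residual quadratic $3(\beta+1)^5(\beta^2+6\beta+1)\mathcal{K}_i^2-8(\beta+1)^7\mathcal{K}_i+16(\beta+1)^7$ has negative discriminant $-128(\beta+1)^{12}(\beta^2+8\beta+1)$, yielding $R>0$ outright. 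Your approach gives a strictly stronger conclusion (positivity of the cofactor, not just matching sign regions) and avoids the root-counting argument, at the cost of the algebraic work in the completion-of-squares; the paper's approach sidesteps any analysis of the quartic $R$ but leans on Descartes and on $D_5$ happening to lie in $\mathbb{R}[\beta][\mathcal{K}_i^3]$.
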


\begin{proof}
Let $f(\beta, \mathcal{K}_i) = (4+2\mathcal{K}_i)(1+\beta)^2-3\beta\mathcal{K}_i^2$ denote the polynomial on the left-hand side of \eqref{eq:rearranged1}.  We rename $D_5$, as in \eqref{eq:g}, the polynomial $g(\beta, \mathcal{K}_i)$.   We must show that $f(\beta, \mathcal{K}_i)$ and $g(\beta, \mathcal{K}_i)$ are the same sign for all $\beta \in \mathbb{R}_{>0}$ and $\mathcal{K}_i \in \mathbb{R}_{<0}$.  

It is straightforward to check, e.g. using $\tt{Maple}$, that $g(\beta, \mathcal{K}_i) = f(\beta, \mathcal{K}_i)h(\beta, \mathcal{K}_i)$, where
\begin{equation}\label{h}
\begin{aligned}
h(\beta, \mathcal{K}_i) &= 9\mathcal{K}_i^4\beta^8 + 6\mathcal{K}_i^3\beta^9+4\mathcal{K}_i^2\beta^{10} +27\mathcal{K}_i^4\beta^7+30\mathcal{K}_i^3\beta^8+40\mathcal{K}_i^2\beta^9 \\
&-8\mathcal{K}_i\beta^{10}+27\mathcal{K}_i^4\beta^6+60\mathcal{K}_i^3\beta^7+144\mathcal{K}_i^2\beta^8-56\mathcal{K}_i\beta^9+16\beta^{10}\\
&+9\mathcal{K}_i^4\beta^5+60\mathcal{K}_i^3\beta^6+260\mathcal{K}_i^2\beta^7-168\mathcal{K}_i\beta^8 +112\beta^9 +30\mathcal{K}_i^3\beta^5\\
&+260\mathcal{K}_i^2\beta^6-280\mathcal{K}_i\beta^7+336\beta^8+6\mathcal{K}_i^3\beta^4 +144\mathcal{K}_i^2\beta^5-280\mathcal{K}_i\beta^6 \\
&+560\beta^7+40\mathcal{K}_i^2\beta^4-168\mathcal{K}_i\beta^4+560\beta^6+4\mathcal{K}_i^2\beta^3-56\mathcal{K}_i\beta^4+336\beta^5\\
&-8\mathcal{K}_i\beta^3+112\beta^4+16\beta^3. \nonumber
\end{aligned}
\end{equation} 
Because $g =fh$, any root of $f$ is also a root of $g$.  We will use this fact below.

Fix $\tilde{\beta}>0$.  Let $g_{\tilde{\beta}}(\mathcal{K}_i) := g(\tilde{\beta},\mathcal{K}_i) $ and $f_{\tilde{\beta}}(\mathcal{K}_i):= f(\tilde{\beta},\mathcal{K}_i)$.  We rewrite $g_{\tilde{\beta}}$:
\begin{align}
g_{\tilde{\beta}}(\mathcal{K}_i) &= \mathcal{K}_i^6(-81\tilde{\beta}^6-81\tilde{\beta}^5-27\tilde{\beta}^4)+\mathcal{K}_i^3(8\tilde{\beta}^{10}+144\tilde{\beta}^9 \nonumber\\
&+792\tilde{\beta}^8+2184\tilde{\beta}^7+3528\tilde{\beta}^6+3528\tilde{\beta}^5+2184\tilde{\beta}^4\nonumber \\
&+792\tilde{\beta}^3+144\tilde{\beta}^2+8\tilde{\beta}) + C, \label{eq:rearr_g}
\end{align}
where $C$ is the sum of all the pure $\beta$ terms in \eqref{eq:g}.  It is easy to check that $C> 0$ when $\tilde{\beta}>0$.  Thus, we see from \eqref{eq:rearr_g} that the polynomial $g_{\tilde{\beta}}$ has one sign change.  Therefore, by Descartes' rule of signs, $g_{\tilde{\beta}}$ has at most one positive real root and at most one negative real root.  

From \eqref{eq:rearranged1}, $f_{\tilde{\beta}}(\mathcal{K}_i)$ is a quadratic polynomial in $\mathcal{K}_i$ that is downward facing and has a positive y-intercept namely, (4$(1+\tilde{\beta})^2$).  Therefore, $f_{\tilde{\beta}}$ has exactly two real roots, and thus, $g_{\tilde{\beta}}$ has exactly two real roots as well because $g=fh$ and, as noted above, $g_{\tilde{\beta}}$ has at most two real roots.  

We label these two real roots $r_1$ and $r_2$ with $r_1 < r_2$.  Since $g_{\tilde{\beta}}$ has even degree in $\mathcal{K}_i$ with a negative leading coefficient and a positive y-intercept, we know that $g_{\tilde{\beta}}>0$ if and only if $\mathcal{K}_i$ is in the interval $(r_1,r_2)$.  It is straightforward to check that $f_{\tilde{\beta}}(\mathcal{K}_i)$ also is positive if and only if $\mathcal{K}_i$ is in the interval $(r_1,r_2)$.  Therefore, $f_{\tilde{\beta}}>0$ if and only if $g_{\tilde{\beta}}>0$.  Our choice of $\tilde{\beta}>0$ was arbitrary.  Therefore, the two inequalities $D_5>0$ and \eqref{adj_eq:Muller_stable} are equivalent.  
\end{proof}

Corollary \ref{cor:Cor_1} and the fact that M\"uller \textit{et al.}'s\ criterion for system \eqref{eq:Muller_system} is given by a single inequality lead us to conjecture that, when $n$ is odd, stability of $E_C$ depends only on the penultimate Hurwitz determinant.  

\begin{conj}\label{conj}
For $n$ odd, system \eqref{eq:repress} is stable at $E_C$ if and only if $D_{2n-1} > 0$.
\end{conj}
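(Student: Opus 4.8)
The plan is to split the equivalence, dispatch the easy direction with the Routh--Hurwitz criterion, and reduce the hard direction to a root-crossing analysis of the one-parameter family obtained by varying $|\mathcal{K}|$. The forward implication is immediate: if $E_C$ is locally asymptotically stable, then by Theorem \ref{thm:R-H} all Hurwitz determinants are positive, so in particular $D_{2n-1}>0$. Everything therefore rests on the converse, and by Theorem \ref{thm:det_Hurs}---which already gives $D_1,\dots,D_{n+1}>0$ unconditionally and $D_{2n}=(\mathcal{D}-\mathcal{K})D_{2n-1}$ with $\mathcal{D}-\mathcal{K}>0$---the conjecture reduces exactly to the implication that $D_{2n-1}>0$ forces $D_{n+2},\dots,D_{2n-2}>0$.

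To attack this I would study the homotopy $p_c(\lambda)=q(\lambda)+c$ for $c\in[0,|\mathcal{K}|]$, where $q(\lambda)=\prod_{i=1}^n(\lambda+\partial_i^R)(\lambda+\partial_i^P)$ has the $2n$ real negative roots $-\partial_i^R,-\partial_i^P$ and is therefore stable, so all $D_i$ are positive at $c=0$. Because the constant term of $p_c$ is $\mathcal{D}+c>0$, the value $\lambda=0$ is never a root, so stability can be lost only through a complex-conjugate pair crossing the imaginary axis; and by Orlando's classical formula $D_{2n-1}$ equals, up to a fixed sign, $\prod_{i<j}(\lambda_i+\lambda_j)$ over the roots $\lambda_i$ of $p_c$, so $D_{2n-1}=0$ precisely when some pair of roots sums to zero, i.e.\ when a purely imaginary pair $\pm i\omega$ appears. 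Writing the crossing condition $q(i\omega)=-c$ as $\sum_j\arctan(\omega/\rho_j)\in\{\pi,3\pi,\dots\}$, with $\rho_j$ ranging over the $2n$ degradation rates, I would extract the finitely many crossing frequencies $\omega_1<\omega_2<\cdots$ (there are $(n-1)/2$ of them) together with their critical values $c_m=|q(i\omega_m)|$, which are increasing in $m$. For $n=3$ there is a single crossing, $\{c:D_5>0\}=[0,c_1)$ is exactly the stability interval, and the conjecture follows---recovering Corollary \ref{cor:Cor_1}.

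The hard part, and what I expect to be the true obstruction for $n\ge5$, is the direction of these crossings. Differentiating $q(\lambda)+c=0$ gives $\tfrac{d\lambda}{dc}=-1/q'(\lambda)$, and using $q'(i\omega_m)=q(i\omega_m)\sum_j(i\omega_m+\rho_j)^{-1}$ with $q(i\omega_m)=-c_m<0$ yields $\mathrm{Re}\,q'(i\omega_m)=-c_m\sum_j \rho_j/(\rho_j^2+\omega_m^2)<0$, hence $\mathrm{Re}\,\tfrac{d\lambda}{dc}>0$ at every crossing. Thus each crossing is destabilizing: as $c$ passes $c_m$ another conjugate pair enters the right half-plane, the within-pair factor $\lambda_i+\lambda_j$ in the Orlando product flips sign while every cross-term with the remaining roots stays positive, so $D_{2n-1}$ changes sign at each $c_m$. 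Consequently $D_{2n-1}<0$ on $(c_1,c_2)$ but $D_{2n-1}>0$ again on $(c_2,c_3)$, where the steady state already has two pairs of eigenvalues in the right half-plane.

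The central difficulty is therefore that for $n\ge5$ the penultimate determinant is not monotone in $c$ and returns to positivity while the steady state is unstable; moreover the determinantal identity that made $D_{n+1}$ tractable in Theorem \ref{thm:det_Hurs} does not extend, since for $k>n+1$ the difference $D_k^p-D_k^q$ is a higher-degree, non-sign-definite polynomial in $|\mathcal{K}|$. I would accordingly expect the clean equivalence to hold verbatim only for $n=3$ (or whenever $|\mathcal{K}|<c_2$), and the real work for general odd $n$ to be either a restriction of the hypotheses confining $|\mathcal{K}|$ below the second crossing value $c_2$, or else an explicit counterexample for $|\mathcal{K}|>c_2$ guided by the transversality sign computed above.
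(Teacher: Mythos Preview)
This statement is presented in the paper as a \emph{conjecture}, not a theorem; there is no proof to compare against. What the paper offers is a paragraph of heuristic evidence: because system~\eqref{eq:repress} is a monotone cyclic feedback system, the Mallet-Paret--Smith theory restricts bifurcations of $E_C$ to simple Hopf bifurcations (zero is never an eigenvalue since the constant term $\mathcal{D}-\mathcal{K}$ is positive), and by Yang's criterion a simple Hopf point has $D_1,\dots,D_{2n-2}>0$ with $D_{2n-1}=0$. The paper then explicitly names the very gap you have put your finger on: ``there could be a point in parameter space where $E_C$ is unstable but nevertheless $D_{2n-1}>0$.''

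Your analysis is considerably sharper than the paper's discussion and, rather than proving the conjecture, sketches a route to a \emph{counterexample} for odd $n\ge5$. The transversality computation showing every imaginary-axis crossing is destabilizing is correct, and the one loose end in your Orlando sign argument can actually be closed: for $\mathrm{Re}\,\mu\neq0$ one has
\[
|q(\mu)|^2=\prod_j\bigl((\rho_j+\mathrm{Re}\,\mu)^2+(\mathrm{Im}\,\mu)^2\bigr)\neq\prod_j\bigl((\rho_j-\mathrm{Re}\,\mu)^2+(\mathrm{Im}\,\mu)^2\bigr)=|q(-\mu)|^2,
\]
so $q(\mu)=q(-\mu)=-c$ forces $\mu$ onto the imaginary axis, and hence the only zeros of $c\mapsto D_{2n-1}(c)$ on $(0,\infty)$ are exactly the $(n-1)/2$ crossing values $c_1<\cdots<c_{(n-1)/2}$, each simple by your transversality calculation. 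Your sign pattern is therefore exact: for $n=5$ and $|\mathcal{K}|>c_2$ one has $D_{2n-1}>0$ while two conjugate eigenvalue pairs sit in the open right half-plane. Realizing such $|\mathcal{K}|$ in system~\eqref{eq:repress} is straightforward with linear degradation and translation and sufficiently steep transcription functions, so what you have written is effectively a blueprint for disproving the conjecture beyond $n=3$, not a proof of it---and that is consistent with the paper leaving it open.
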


Evidence for Conjecture \ref{conj} can be seen in the possible types of bifurcations of $E_C$ in the odd case.  We reorder the species as $r_1, p_1, r_2, p_2,...$ to see that system \eqref{eq:repress} is a \textit{monotone system}--a system that satisfies $\dot{x}_i = f(x_i, x_{i-1})$ for all $i$.  In \cite{Mallet-Paret1990}, Mallet-Paret and Smith showed that all omega-limit sets of monotone systems can be embedded in $\mathbb{R}^2$.  Therefore, the possible bifurcations are stationary bifurcations or simple Hopf bifurcations.  However, there cannot be stationary bifurcations because zero is never a root of the characteristic polynomial.  Therefore, all bifurcations are simple Hopf bifurcations.  Furthermore, from \cite{yang}, at simple Hopf bifurcations, the following conditions hold: $D_1$, ..., $D_{2n-2} > 0$, and $D_{2n-1} = D_{2n} = 0$.  This reasoning is not sufficient to prove the conjecture, however, because there could be a point in parameter space where $E_C$ is unstable but nevertheless $D_{2n-1} > 0$.  

Finally, we prove a result about the global dynamics of system \eqref{eq:repress}, which is similar to Theorem 2 in \cite{Muller2006}, by using the result on monotone systems given in \cite{Mallet-Paret1990}.

\begin{theorem}\label{thm:periodic}
For $n$ odd, system \eqref{eq:repress} has the following properties: (i) Every orbit converges to $E_C$ or to a periodic orbit.  (ii) If $E_C$ is unstable, then there exists a periodic-orbit attractor.  
\end{theorem}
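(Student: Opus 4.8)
The plan is to recognize system \eqref{eq:repress} as a \emph{monotone cyclic feedback system} in the sense of Mallet-Paret and Smith and then invoke their Poincar\'e--Bendixson theorem, which is exactly the embedding result quoted in the paragraph preceding the theorem. First I would relabel the $2n$ coordinates cyclically, $y_1 = r_1,\ y_2 = p_1,\ y_3 = r_2,\ y_4 = p_2,\ \dots,\ y_{2n-1} = r_n,\ y_{2n} = p_n$, so that the system takes the form $\dot y_j = g_j(y_j, y_{j-1})$ with indices read mod $2n$. The coupling $\partial g_j/\partial y_{j-1}$ equals $k_i'(r_i) > 0$ on the translation edges (by (D3)) and $a_i'(p_{i-1}) < 0$ on the transcription edges (by (A3)), so each off-diagonal coupling has one fixed sign on the positive orthant and this is a genuine cyclic feedback system. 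The product of the $2n$ edge signs is $(+1)^n(-1)^n = (-1)^n = -1$ for $n$ odd, so the loop is a \emph{negative} feedback system, consistent with the discussion after Definition~\ref{def:DK}.

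Next I would establish that every forward orbit is bounded, so that omega-limit sets are nonempty, compact, and invariant. The closed positive orthant is forward invariant because the vector field points inward on each face: on $\{r_i = 0\}$ one has $\dot r_i = a_i(p_{i-1}) > 0$ by (A2)--(A4), and on $\{p_i = 0\}$ one has $\dot p_i = k_i(r_i) \ge 0$ by (D2)--(D3). For an upper bound, assumption (A6) gives $\delta_i^R > a_i(0) \ge a_i(p_{i-1})$, so $d_{r_i}(r_i)$ eventually exceeds $a_i(p_{i-1})$ and $\dot r_i < 0$ for $r_i$ large; then $k_i(r_i) \le k_i(d_{r_i}^{-1}(a_i(0)))$ once $r_i$ enters its bounded range, and the second half of (A6), $\delta_i^P > k_i(d_{r_i}^{-1}(a_i(0)))$, forces $\dot p_i < 0$ for $p_i$ large. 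Hence there is a compact absorbing set and the system is dissipative; note this is exactly why (A6) is the right hypothesis.

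With these two facts I would apply the Mallet-Paret--Smith Poincar\'e--Bendixson theorem: the omega-limit set of any bounded orbit is either a single nonconstant periodic orbit, or else every point of it has alpha- and omega-limit sets contained in the set of equilibria. Since $n$ is odd, Proposition~\ref{prop:unique} (under (A6)) guarantees that $E_C$ is the unique equilibrium, so in the second case the omega-limit set is $E_C$ together with at most homoclinic orbits to $E_C$. Ruling out such homoclinic loops, so that the omega-limit set collapses to the single point $E_C$ and part (i) reads cleanly, is the step I expect to be the main obstacle; I would attack it using the integer-valued discrete Lyapunov function of Mallet-Paret--Smith, which is nonincreasing along orbits and is incompatible with a homoclinic connection at the unique hyperbolic equilibrium.

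Finally, for part (ii), suppose $E_C$ is unstable. Then its stable manifold has dimension at most $2n-1$, so orbits starting in the absorbing set off this measure-zero set cannot converge to $E_C$; by part (i) their omega-limit sets are periodic orbits, which already establishes that periodic orbits exist. To upgrade this to a \emph{periodic-orbit attractor}, I would use the planar-like structure afforded by the embedding of omega-limit sets together with dissipativity to select an orbitally stable periodic orbit attracting the open, full-measure complement of $W^s(E_C)$. This orbital-stability step is the second place where care is needed, and it parallels the corresponding argument in Theorem~2 of \cite{Muller2006}.
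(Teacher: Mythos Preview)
Your proposal is correct and follows essentially the same route as the paper: relabel to exhibit the system as a monotone cyclic feedback system, verify the sign product is $-1$ for $n$ odd, and invoke the Mallet-Paret--Smith Poincar\'e--Bendixson theorem together with uniqueness of $E_C$ from Proposition~\ref{prop:unique}. The paper's own proof is extremely terse---it simply refers to Theorem~2 of \cite{Muller2006} and the Main Theorem of \cite{Mallet-Paret1990}---so you have in fact filled in the intended details.

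Two points of comparison are worth noting. First, you explicitly establish dissipativity via (A6), which the paper does not mention but which is indeed required for omega-limit sets to be nonempty; this is a genuine gap you have closed. Second, you flag the exclusion of homoclinic loops as the main obstacle and propose to attack it with the discrete Lyapunov function of Mallet-Paret--Smith. The paper, by contrast, dispatches this in one sentence: ``we can rule out the third option of the Main Theorem in \cite{Mallet-Paret1990} because $E_C$ is unique, so there are no heteroclinic or homoclinic orbits.'' This is somewhat glib---uniqueness of $E_C$ immediately excludes heteroclinic connections, but ruling out a homoclinic loop at a single equilibrium does rely on the finer structure in \cite{Mallet-Paret1990} (essentially the zero-number argument you allude to). So your caution is warranted, and your proposed tool is the right one; the paper is simply taking for granted that the cited reference already does this work.
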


\begin{proof}
It is straightforward to check that the proof is the same as that of Theorem 2 in \cite{Muller2006}, which uses \cite{Mallet-Paret1990}.  We note that we can rule out the third option of the Main Theorem in \cite{Mallet-Paret1990} because $E_C$ is unique, so there are no heteroclinic or homoclinic orbits. 
\end{proof}

Theorem \ref{thm:periodic} is significant biologically because it shows the species concentrations of the repressilator constructed with an odd number of genes will either stabilize to the steady state value or to a limit-cycle.

\section{Transcription-Rate Function from Successive-Binding}

In \cite{Muller2006}, M\"{u}ller \textit{et al.}\ used a function arising from the single-step binding assumption, discussed above in Section 1, to model the binding of a gene product repressor to the next gene's promoter.  Here, we derive a new function to model binding of the gene product and promoter based on the successive-binding reaction mechanism, and then use it to define a new transcription-rate function.

First, we recall, from \cite{Muller2006}, the function that models the amount of binding as a function of the gene product and the promoter, $c_i^{(m)}$, and the resulting transcription-rate function, $a_i$:
\begin{equation}\label{eq:trad_ai}
c_i^{(m)} = \bar{g}\frac{p_{i-1}^m}{K + p_{i-1}^m} \quad \text{and} \quad a_i = \bar{g}[(1-\delta)(1-s(\frac{p_{i-1}}{K})) + \delta] ,
\end{equation}
where $\bar{g}$ is the total gene concentration; $\delta$ is the ratio of repressed to unrepressed transcription; $K$ is a dissociation constant; and
\begin{equation}\label{eq:old_transcrate}
s(x) = \frac{x^h}{1+x^h}, 
\end{equation}
where the Hill coefficient, $h > 0$.  One advantage to using the transcription-rate function \eqref{eq:old_transcrate} from the single-step binding assumption is that it generalizes naturally with any positive, real Hill coefficient.  

\subsection{Successive-Binding Function}

Next, we recall the assumptions for successive-binding introduced in Section~1.
\begin{enumerate}
\item There are $m$ binding sites on a promoter, and the repressor proteins bind in order from sites 1 to $m$.
\item Transcription cannot occur if $m$ repressor proteins are bound to the promoter.  Transcription can occur in all other cases.
\item The repressor protein binds rapidly to the promoter.
\item Repressor proteins bind to the $m$ binding sites at varying rates.
\end{enumerate}
These assumptions are adapted from \cite[Chapter 2]{forger} where Forger presents three models of repression.  The model we are interested in is his Model ``a'': A Model for Transcription Regulation with Independent Binding Sites.  

Here, we present the reaction mechanism and follow the notation in \cite{Muller2006}.  Let $\textbf{G}_i$ be gene-$i$; and $\textbf{P}_{i-1}$ the repressor produced by the preceding gene.  We write the \textit{gene-repressor complex} as $\textbf{C}_i^{(m)}$.  The \textit{successive-binding reaction mechanism}~is
\begin{align}
\textbf{G}_i+\textbf{P}_{i-1} &\rightleftharpoons \textbf{C}_i^{(1)} \nonumber \\
\textbf{C}_i^{(1)} + \textbf{P}_{i-1} &\rightleftharpoons \textbf{C}_i^{(2)} \nonumber \\
&\vdots  \label{eq:suc_bind} \\
\textbf{C}_i^{(m-1)} + \textbf{P}_{i-1} &\rightleftharpoons \textbf{C}_i^{(m)}. \nonumber
\end{align}

Assumption 1 presumes that the promoter has $m$ binding sites and that repressors bind in order from site 1 to $m$, so the mechanism has $m$ possible gene-repressor complexes $\textbf{C}_i^{(1)}, ..., \textbf{C}_i^{(m)}$.  We will derive the binding function $c_i^{(m)}$ that models the amount of binding as a function of the total gene concentration and concentration of the repressor present.  We proceed with this derivation below. 

Assumption 3 allows us to use the \textit{quasi steady state assumption} on the concentrations of the gene-repressor complexes to derive the binding function.  The binding function for $\textbf{C}_i^{(1)}$ is 
\begin{equation}\label{eq:c_i}
c_i^{(1)} = \frac{g_ip_{i-1}}{K_1}, 
\end{equation}
where $K_1$ is a dissociation constant.  Here, dissociation constants for each gene are distinct because of Assumption 4.  We use the function \eqref{eq:c_i} to write the binding function for $\textbf{C}_i^{(2)}$:
\begin{equation}
c_i^{(2)} = \frac{p_{i-1}c_i^{(1)}}{K_2} = \frac{g_ip_{i-1}^2}{K_1K_2}, \nonumber
\end{equation}
where $K_2$ is another dissociation constant.  We continue this process to get a general formula for the binding function of the $j$-th complex:
\begin{equation}\label{eq:general}
c_i^{(j)} = \frac{g_ip_{i-1}^j}{K_1K_2\cdots K_j},
\end{equation}
where $K_1,\dots,K_j$ are all dissociation constants.  

Conservation of mass for genes is given by
\begin{equation}\label{eq:conservation}
\bar{g} = g_i + c_i^{(1)} + c_i^{(2)} + \dots + c_i^{(m)}. 
\end{equation}
This conservation equation differs from the conservation equation arising from single-step binding.  Under single-step binding, the genes are either free or consumed in the final gene-repressor complex, leading to the conservation equation:
\begin{equation}
\bar{g} = g_i + c_i^{(m)}.  \nonumber
\end{equation}

We desire a binding function that depends only on the protein product concentration and the total gene concentration.  To obtain such a function, we must first solve for $c_i^{(m)}$ in terms of $p_{i-1}$ using Eqns.\ \eqref{eq:general} and \eqref{eq:conservation}.
\begin{equation}
c_i^{(m)} = \frac{(\bar{g} - c^{(1)}_i - \cdots - c^{(m-1)}_i - c^{(m)}_i)p_{i-1}^m}{K_1K_2 \cdots K_m} \nonumber
\end{equation}
\begin{equation}
\implies c_i^{(m)} = \frac{\bar{g}p_{i-1}^m}{K_1 K_2 \cdots K_m} - \frac{c_i^{(m)}p_i}{K_1} - \cdots - \frac{c_i^{(m)}p_{i-1}^{m-1}}{K_1K_2\cdots K_{m-1}} - \frac{c^{(m)}p_{i-1}^m}{K_1K_2\cdots K_m} \nonumber
\end{equation}
\begin{equation}
\implies c^{(m)}\left(1 + \frac{p_{i-1}}{K_1} + \frac{p_{i-1}^2}{K_1K_2} + \dots + \frac{p_{i-1}^m}{K_1K_2 \cdots K_m}\right) = \frac{\bar{g}p_{i-1}^m}{K_1K_2 \cdots K_m} \nonumber
\end{equation}
\begin{equation}
\implies c^{(m)}\left(\frac{K_1K_2 \cdots K_m + K_2\cdots K_m p_{i-1} + \dots + K_{m-1}p_{i-1}^{m-1} + p_{i-1}^m}{K_1K_2 \cdots K_m}\right) \nonumber
\end{equation}
\begin{equation}
= \frac{\bar{g}p_{i-1}^m}{K_1 K_2 \cdots K_m} \nonumber
\end{equation}
\begin{equation}
\implies c^{(m)} = \frac{\bar{g}p_{i-1}^m}{\sum_{j=0}^m ((\prod_{l> j}K_l) p_{i-1}^j)}.  \nonumber
\end{equation}
Similarly, we obtain $c_i^{(j)}$
\begin{equation}\label{eq:old_binding}
c^{(j)}_i = \frac{(\prod_{\ell>j}K_\ell)\bar{g}p_{i-1}^j}{\sum_{j=0}^m ((\prod_{\ell> j}^m K_\ell) p_{i-1}^j)}. 
\end{equation}
We simplify notation by letting $B_i(p_{i-1}) = \sum_{j=0}^m ((\prod_{\ell> j}K_\ell) p_{i-1}^j)$ and $A_i^{(j)}(p_{i-1}) = (\prod_{\ell>j}^m K_\ell)p_{i-1}^j$, so that:
\begin{equation}\label{AB_relation}
B_i(p_{i-1}) = \prod_{j=1}^m K_j + \sum_{j=1}^m A_i^{(j)}(p_{i-1}).
\end{equation}
Therefore, we rewrite Eqn.\ \eqref{eq:old_binding}, the \textit{successive-binding function}, as
\begin{equation}\label{eq:binding}
\boxed{c^{(j)}_i = \frac{\bar{g}A_i^{(j)}(p_{i-1})}{B_i(p_{i-1})} \text{   for   } j=1,\dots, m.} 
\end{equation}

\subsection{Transcription-rate Function Obtained from Successive-binding Function}

We assume as in \cite{Muller2006} that the transcription rate $a_i$ depends linearly on the free gene concentration $g_i$ given by the two cases
\begin{equation}\label{eq:case1}
g_i = \bar{g} \implies a_i = \bar{g}, 
\end{equation}
and
\begin{equation}\label{eq:case2}
g_i = 0 \implies a_i = \delta \bar{g}. 
\end{equation}
Here, following M\"{u}ller \textit{et al.} \cite{Muller2006}, $\delta$ denotes the ratio of repressed to unrepressed transcription.  Case \eqref{eq:case1} assumes that, if the gene is free of any repressors, then transcriptional activity will occur proportional to the total gene concentration.  Case \eqref{eq:case2} assumes that, if $m$ repressors are bound to the gene, then transcriptional activity will occur proportional to the constant $\delta$. 

From cases \eqref{eq:case1} and \eqref{eq:case2}, the transcription-rate $a_i$ is given by
\begin{equation}
a_i = (1-\delta)g_i+\delta \bar{g}. \nonumber
\end{equation}
We use Eqns. \eqref{eq:conservation} and \eqref{eq:binding} to rewrite $a_i$:
\begin{equation}\label{eq:pre_a_i}
a_i = \bar{g}\left[(1-\delta)\left(1-\frac{A_i^{(1)}(p_{i-1})+A_i^{(2)}(p_{i-1}) + \dots + A_i^{(m)}(p_{i-1})}{B_i(p_{i-1})}\right) + \delta\right]. 
\end{equation}
Using Eqn. \eqref{AB_relation}, we rewrite Eqn. \eqref{eq:pre_a_i} as
\begin{equation}
a_i = \bar{g}\left[\frac{(1-\delta)\prod_{j=1}^mK_j}{B_i(p_{i-1})} + \delta\right]. \nonumber
\end{equation}
To simplify notation, let us write
\begin{equation}\label{eq:final_S_i}
S_i(p_{i-1}) := \frac{\prod_{j=1}^mK_j}{B_i(p_{i-1})}. 
\end{equation}
Then, from Eqns. \eqref{eq:pre_a_i} and \eqref{eq:final_S_i}, the derived transcription-rate function is:
\begin{equation}\label{eq:final_a_i}
\boxed{a_i = \bar{g}[(1-\delta)S_i(p_{i-1}) + \delta].} 
\end{equation}
It is straightforward to check that Eqn. \eqref{eq:final_a_i} satisfies assumptions (A1)-(A4), and hence is a valid transcription-rate function.

\begin{proposition}\label{prop:6}
The transcription-rate function arising from the successive-binding mechanism, given by Eqn. \eqref{eq:final_a_i}, satisfies assumptions (A1)-(A4).
\end{proposition}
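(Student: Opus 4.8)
The plan is to reduce the entire verification to one structural observation about the denominator $B_i$ and then read off (A1)--(A4) in turn. The crucial point is that, by the definition preceding \eqref{AB_relation}, the function $B_i(p_{i-1}) = \sum_{j=0}^m \big(\prod_{\ell>j} K_\ell\big) p_{i-1}^j$ is a polynomial in $p_{i-1}$ all of whose coefficients are \emph{strictly positive}, since each is a product of dissociation constants $K_\ell > 0$. In particular its constant term (the $j=0$ term) is $\prod_{j=1}^m K_j$ and its degree-one coefficient (the $j=1$ term) is $\prod_{\ell>1} K_\ell > 0$. Consequently $B_i$ is smooth and satisfies $B_i(x) > 0$ for every $x \ge 0$, so the quotient $S_i = \big(\prod_{j=1}^m K_j\big)/B_i$ in \eqref{eq:final_S_i} is well defined with nowhere-vanishing denominator. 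Everything else follows from this.

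With positivity of $B_i$ in hand, (A1) holds because $S_i$ is a ratio of smooth functions with strictly positive denominator, hence $C^\infty$ (in particular $C^1$) on $\mathbb{R}_{\ge 0}$, and $a_i$ in \eqref{eq:final_a_i} is an affine function of $S_i$. For (A4), I would evaluate at $x=0$: since $B_i(0) = \prod_{j=1}^m K_j$, one gets $S_i(0) = 1$, whence $a_i(0) = \bar{g}\big[(1-\delta)\cdot 1 + \delta\big] = \bar{g} > 0$. For (A2), note $S_i(x) > 0$ for all $x \ge 0$, and with $0 \le \delta \le 1$ the bracket $(1-\delta)S_i(x) + \delta$ is a sum of nonnegative terms, so $a_i(x) \ge 0$ (in fact $a_i(x) > 0$).

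The one step needing a short computation is (A3). I would differentiate to obtain
\begin{equation}
a_i'(x) = \bar{g}(1-\delta)\, S_i'(x), \qquad S_i'(x) = -\Big(\prod_{j=1}^m K_j\Big)\frac{B_i'(x)}{B_i(x)^2}. \nonumber
\end{equation}
Since $B_i'(x) = \sum_{j=1}^m j\big(\prod_{\ell>j} K_\ell\big) x^{j-1}$ has only nonnegative terms and its $j=1$ term is the positive constant $\prod_{\ell>1} K_\ell$, we get $B_i'(x) > 0$ for all $x \ge 0$; hence $S_i'(x) < 0$ and, provided $\delta < 1$, also $a_i'(x) < 0$, which gives strict monotonicity on $\mathbb{R}_{\ge 0}$.

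The argument is essentially routine once the positivity of $B_i$ and of $B_i'$ is established; the only genuine hypotheses being used are $\bar{g} > 0$, that all dissociation constants are positive, and that $\delta < 1$ (repressed transcription strictly below unrepressed), without which (A3) would fail. The nearest thing to an obstacle is therefore just the bookkeeping in (A3): one must confirm that the degree-one coefficient of $B_i$ does not vanish, so that $B_i' > 0$ holds all the way down to $x=0$ and not merely for $x > 0$. This is what secures strict decrease on the \emph{closed} half-line rather than only on its interior.
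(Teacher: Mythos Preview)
Your proof is correct and is precisely the ``straightforward check'' that the paper alludes to but does not spell out; the paper provides no detailed argument for this proposition, so there is nothing further to compare.
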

Propositions \ref{prop:unique} and \ref{prop:6} immediately yield the following corollary.

\begin{corollary}
Consider system \eqref{eq:repress} with $n$ odd and transcription-rate functions $a_i(p_{i-1})$ Eqn. \eqref{eq:final_a_i}, that is, arising from the successive-binding mechanism.  Then the central steady state $E_C$ exists and is the unique, positive steady state.
\end{corollary}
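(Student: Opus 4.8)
The plan is to invoke Proposition \ref{prop:unique} directly, since its conclusion---a unique positive steady state in the odd-$n$ case---is exactly the assertion of the corollary. All of the work therefore reduces to checking that the successive-binding model meets the hypotheses of that proposition. Proposition \ref{prop:unique} is stated for any system \eqref{eq:repress} satisfying the standing assumptions (A1)--(A4) on each $a_i$ and (D1)--(D3) on each $d_{p_i}$, $d_{r_i}$, $k_i$, together with the extra assumption (A6); its proof then reduces steady states to fixed points of the odd-length composition in Eqn.\ \eqref{eq:fixed}, which is strictly decreasing and positive at $0$, and hence has a unique positive solution.

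First I would supply (A1)--(A4) for the transcription-rate function \eqref{eq:final_a_i}: this is precisely the content of Proposition \ref{prop:6}, so nothing further is needed there. The degradation- and translation-rate functions satisfy (D1)--(D3) by the standing assumptions of Section 2. Thus the only hypothesis of Proposition \ref{prop:unique} that is not immediately in hand is (A6).

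To address (A6) I would evaluate $a_i$ at the origin. Each $A_i^{(j)}(p_{i-1})=(\prod_{\ell>j}^m K_\ell)\,p_{i-1}^j$ is a monomial of positive degree in $p_{i-1}$, so $A_i^{(j)}(0)=0$ for every $j\geq 1$; by Eqn.\ \eqref{AB_relation} this gives $B_i(0)=\prod_{j=1}^m K_j$, hence $S_i(0)=1$ and therefore $a_i(0)=\bar{g}$. Consequently (A6) specializes to the two inequalities $\delta_i^R>\bar{g}$ and $\delta_i^P>k_i\!\left(d_{r_i}^{-1}(\bar{g})\right)$ for every $i$.

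The hard part is that these two inequalities are constraints on the (otherwise arbitrary) degradation and translation functions, not on the transcription function, so they cannot follow from Proposition \ref{prop:6} alone; they must be taken in force as hypothesis (A6), equivalently by requiring the saturation degradation levels to exceed the single threshold $\bar{g}$ produced above. Once (A6) holds, Proposition \ref{prop:unique} applies verbatim: the fixed-point argument yields the unique positive solution of Eqn.\ \eqref{eq:fixed}, which assembles into the central steady state $E_C$ of \eqref{eq:steady_state} as the unique steady state in $\mathbb{R}^{2n}_+$. This completes the proof.
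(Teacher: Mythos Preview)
Your approach is exactly the paper's: the corollary is stated as an immediate consequence of Propositions~\ref{prop:unique} and~\ref{prop:6}, and you invoke these in the same way. Your observation that (A6) is still required---and specializes here to $\delta_i^R>\bar g$ and $\delta_i^P>k_i\bigl(d_{r_i}^{-1}(\bar g)\bigr)$ via the computation $a_i(0)=\bar g$---is correct and in fact more careful than the paper, which leaves (A6) implicit in the corollary's statement even though it is an explicit hypothesis of Proposition~\ref{prop:unique} and not a standing assumption.
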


\begin{remark}
Forger, in \cite{forger}, simplifies Eqn.\ \eqref{eq:final_S_i} by assuming that the dissociation constant, $K_j$, is the same across each reaction in the successive-binding mechanism \eqref{eq:suc_bind}. Hence, his version of Eqn. \eqref{eq:final_S_i} is: 
\begin{equation}\label{eq:Forger_S_i}
S(p_{i-1}) = \frac{K^m}{(K+p_{i-1})^m}.  \nonumber
\end{equation}
\end{remark}

 \section{Comparison of Models Arising from Hill Functions vs. Successive-Binding Transcription-Rate Functions}

Below, we numerically compare a model using the traditional single-step binding assumption for transcription and another model constructed using the successive-binding assumption.  Specifically, we show that the amplitudes and periods of the oscillations can differ widely (see Figures \ref{fig:Amp} and \ref{fig:Per}).  

The first model is the following three-gene repressilator system:
\begin{align}\label{eq:model1}
    \dot{r}_1 &= \frac{k_1}{1+p_3^h} - r_1, & \dot{p}_1 &= 4r_1 - 3p_1 \nonumber\\
    \dot{r}_2 &= \frac{k_2}{1+p_1^h} - r_2, & \dot{p}_2 &= r_2 - 2p_2 \tag{SS}\\
    \dot{r}_3 &= \frac{k_3}{1+p_2^h} - r_3, &  \dot{p}_3 &= 4r_3 - 4p_3.\nonumber
\end{align}
Model \eqref{eq:model1} (for single-step) is constructed using the single-step binding assumption for each transcription-rate function, and the Hill coefficients, $h$, are assumed to be equal.  
  
In comparison, the second model considered is:
\begin{align}\label{eq:model2}
    \dot{r}_1 &= \frac{k_1}{(1+p_3)^h} - r_1, &  \dot{p}_1 &= 4r_1 - 3p_1\nonumber\\
    \dot{r}_2 &= \frac{k_2}{(1+p_1)^h} - r_2, &  \dot{p}_2 &= r_2 - 2p_2 \tag{SB}\\
    \dot{r}_3 &= \frac{k_3}{(1+p_2)^h} - r_3, & \dot{p}_3 &= 4r_3 - 4p_3. \nonumber 
\end{align}
Model \eqref{eq:model2} (for successive-binding) is constructed using the successive-binding assumption for each transcription-rate function (Eqn.\ \eqref{eq:final_a_i}), and, like model \eqref{eq:model1}, the Hill coefficients, $h$, are assumed to be equal.  Note from systems \eqref{eq:model1} and \eqref{eq:model2} that the two models are equivalent in the degradation and translation components. 

\subsection{Amplitude}

Here, we compare the amplitudes of models \eqref{eq:model1} and \eqref{eq:model2}. For the first numerical comparison, we vary the Hill coefficient, $h$, from 1 to 10 while keeping all other parameters fixed.  For both models \eqref{eq:model1} and \eqref{eq:model2}, we numerically solve the system until it reaches a steady state or a limit cycle.  Then, we compute the amplitude of protein 1 by evaluating the difference of the maximum and minimum protein 1 concentration.  Figure \ref{fig:amp_diff} shows the amplitudes of the first protein concentration with respect to the Hill coefficient (sampled at every one-tenth value--1, 1.1, 1.2, etc.) for models \eqref{eq:model1} (blue) and \eqref{eq:model2} (red).  All computations were performed in $\tt{MATLAB}$ \cite{MatlabOTB}. 

As shown in Figure \ref{fig:amp_diff}, the amplitude of model \eqref{eq:model1} increases to an order of magnitude larger than the amplitude of model \eqref{eq:model2}.  Also, the Hopf bifurcation of model \eqref{eq:model1} with respect to the Hill coefficient occurs when $h \approx 2$ whereas the Hopf bifurcation of model \eqref{eq:model2} happens when $h \approx 3$.  Thus, numerical evidence suggests that the traditional transcription-rate function allows for oscillations to occur at smaller Hill coefficients than for our newly derived transcription-rate function.  This means, in terms of the biology, that under the single-step binding assumption, oscillations can occur when there are fewer repressors binding to the gene promoter.  However, incorporating intermediate steps into the repressor-promoter interactions (like in the successive-binding assumption) leads to more repressors required to produce oscillations. 

Next, we conducted a numerical comparison that fixed all parameters ($h=3$) while letting the transcription rates, $k_1$, $k_2$, and $k_3$, vary.  In order to plot the amplitudes, we assume that $k_1 = k_2 = k_3 = k$ and let $k$ vary from 1 to 10.  Again, we sample $k$ at every one-tenth interval and numerically solve both models to convergence to the steady state or the limit cycle.  We then compute the amplitudes as in the first comparison.  Figure \ref{fig:trans_diff} shows the amplitudes of the first protein concentration with respect to the transcription rate for both models.  

Similar to the first comparison, model \eqref{eq:model1} amplitudes are significantly different from those of model \eqref{eq:model2}, and in fact, they reach an order of magnitude difference (Figure \ref{fig:trans_diff}).  Moreover, the Hopf bifurcation of model \eqref{eq:model1} occurs when $k \approx 2$ while the Hopf bifurcation of model \eqref{eq:model2} happens when $k \approx 4$ (Figure \ref{fig:trans_diff}).  

\begin{figure}
    \centering
    \begin{subfigure}{.49\textwidth}
    \includegraphics[width = \textwidth]{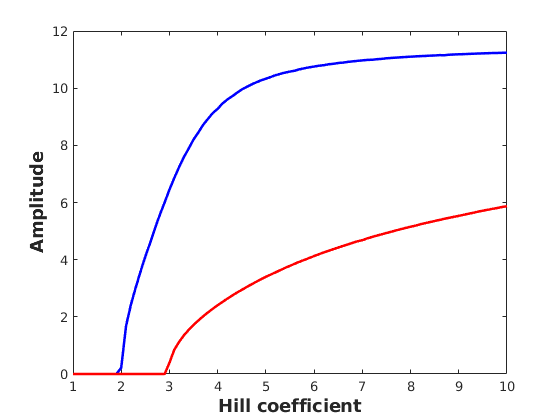}
    \caption{}
    \label{fig:amp_diff}
    \end{subfigure}
    \begin{subfigure}{.49\textwidth}
    \includegraphics[width = \textwidth]{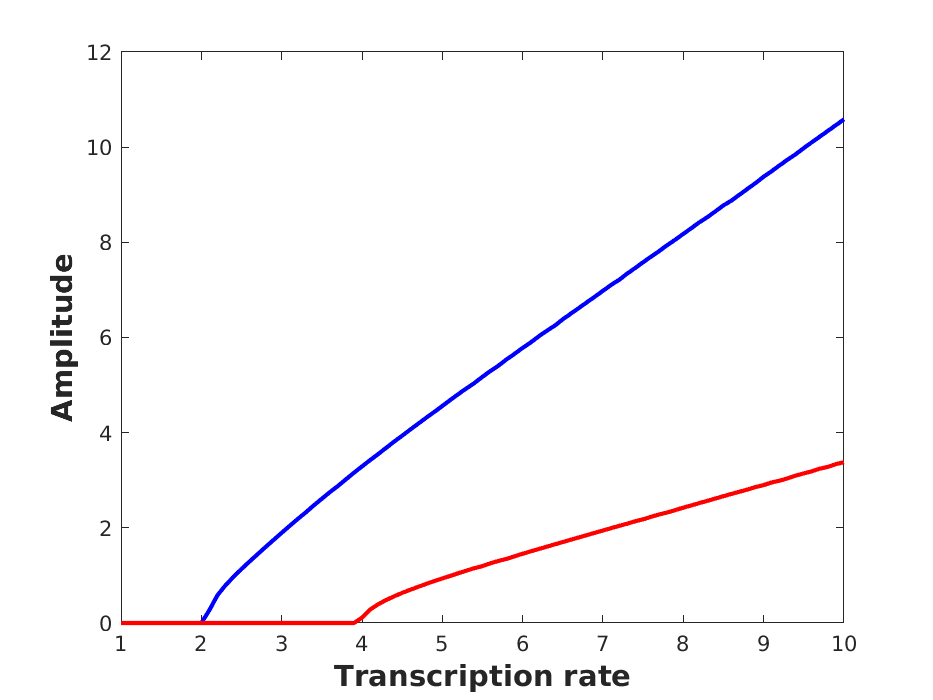}
    \caption{}
    \label{fig:trans_diff}
    \end{subfigure}
    \caption{(a) Amplitudes of the concentration of protein 1 for models \eqref{eq:model1} (blue curve) and \eqref{eq:model2} (red curve) with respect to the Hill coefficient.  We fixed the transcription rates as follows: $k_1 = 10$, $k_2 = 7$, $k_3=9$.  (b) Amplitudes of the concentration of protein 1 for models \eqref{eq:model1} (blue) and \eqref{eq:model2} (red) with respect to the transcription rate.  The initial conditions for both (a) and (b) were $r_1 = 10$, $r_2 = 2$, $r_3 = 3$, $p_1 = 5$, $p_2 = 1$, and $p_3=6$.}
    \label{fig:Amp}
\end{figure}

\begin{figure}
    \centering
    \begin{subfigure}{.49\textwidth}
    \includegraphics[width = \textwidth]{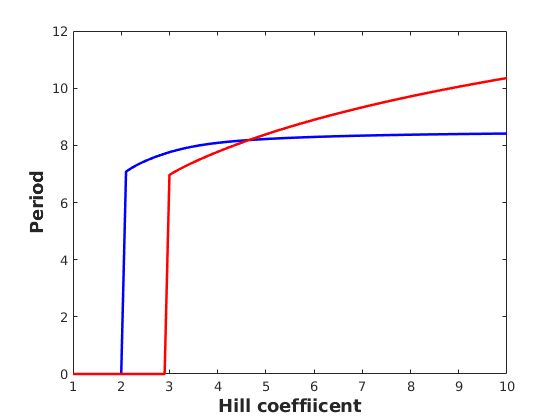}
    \caption{}
    \label{fig:per_hill_diff}
    \end{subfigure}
    \begin{subfigure}{.49\textwidth}
    \includegraphics[width = \textwidth]{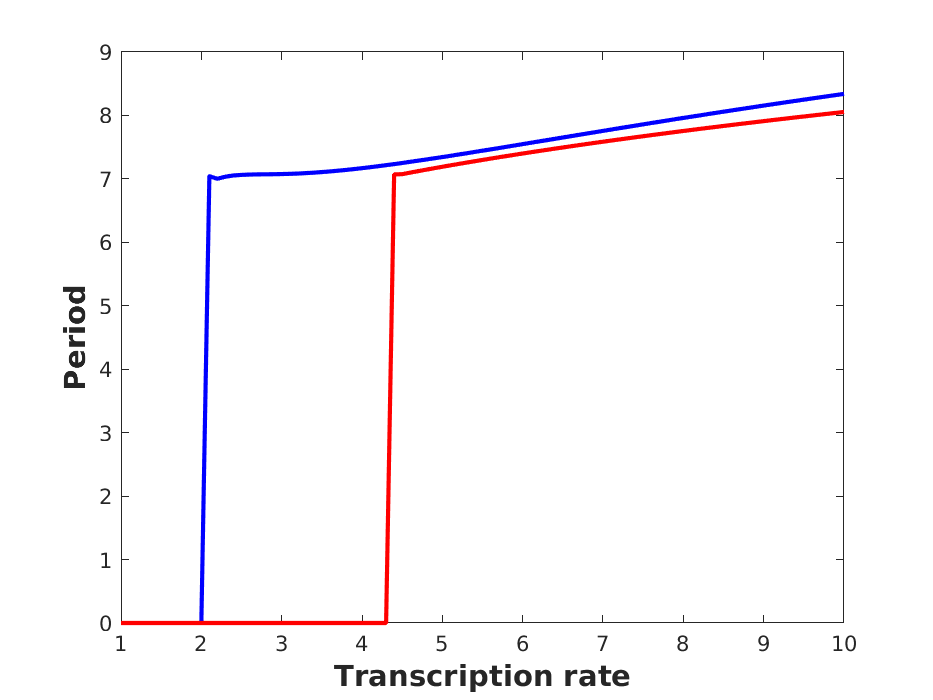}
    \caption{}
    \label{fig:per_tr_diff}
    \end{subfigure}
    \caption{(a) Period of the concentration of protein 1 for models \eqref{eq:model1} (blue) and \eqref{eq:model2} (red) with respect to the Hill coefficient.  Similar to the amplitude comparison in Figure \ref{fig:amp_diff}, the parameters $k_1$, $k_2$, and $k_3$ were set to 10, 7, and 9, respectively.  (b) Period of the concentration of protein 1 for models \eqref{eq:model1} (blue) and \eqref{eq:model2} (red) with respect to the transcription rate.  The Hill coefficient, $h$, was fixed at 4 for the simulations.  The initial conditions for both (a) and (b) were $r_1 = 10$, $r_2 = 2$, $r_3 = 3$, $p_1 = 5$, $p_2 = 1$, and $p_3=6$.}
    \label{fig:Per}
\end{figure}

These and other numerical simulations support the claim that the amplitude of a model constructed using the successive-binding assumption will be smaller than the amplitude of a model constructed using the single-step binding assumption, all other components being equal.  As amplitudes are an important quantity of oscillations of a system, care should therefore be taken when considering appropriate models of genetic repression and transcription or when fitting models to actual repressilator data.  

\subsection{Period}

Similar to the amplitude, the two transcription-rate functions yield dramatic differences in periods.  To compare, we compute the periods of models~\eqref{eq:model1} and \eqref{eq:model2}, again using $\tt{MATLAB}$.  First, we fix all parameters except the Hill coefficient, $h$.  Again, we let $h$ vary from 1 to 10 and sample $h$ at every one-tenth value.  We numerically solve the systems to either the steady state or the limit cycle.  To compute the period, we perform an event location procedure.  The procedure first finds a time point when $p_1 = p$ and $\frac{dp_1}{dt}|_{p_1 = p} > 0$, where $p$ is a concentration known to be in the limit cycle.  Then, the algorithm finds the next time point in which $p_1 = p$ and $\frac{dp_1}{dt}|_{p_1 = p} > 0$ and saves this time point.  The period is then taken to be the difference between the two time points.

Figure \ref{fig:per_hill_diff} shows the periods of the two models with respect to the Hill coefficient.  Again, we see that the Hopf bifurcation of model \eqref{eq:model1} ($h \approx 2$, Figure~\ref{fig:per_hill_diff}) happens earlier than that of model \eqref{eq:model2} ($h \approx 4$, Figure \ref{fig:per_hill_diff}).  Interestingly, however, the period of model \eqref{eq:model2} increases more rapidly with respect to $h$ and eventually surpasses the period of model \eqref{eq:model1} ($h \approx 4.75$, Figure \ref{fig:per_hill_diff}).  

Next, we fix the Hill coefficient, $h = 4$, and let the transcription rates vary.  Again, we set $k_1 = k_2 = k_3 = k$ and vary $k$ from 1 to 10.  Figure \ref{fig:per_tr_diff} shows, for both models, the periods of the first protein concentration with respect to the transcription rate.  The Hopf bifurcation for model \eqref{eq:model1} ($k \approx 2$, Figure \ref{fig:per_tr_diff}) occurs significantly earlier than that of model \eqref{eq:model2} ($k \approx 4.2$, Figure \ref{fig:per_tr_diff}).  However, for transcription rates after the Hopf bifurcation of model \eqref{eq:model2}, the periods do not differ notably, suggesting that the variation in the periods of the two models is most sensitive to the Hill coefficient. 

\section{Discussion}

This work advances the theoretical study of cyclic gene repression by generalizing the current repressilator models.  First, we permit more transcription-rate functions than the traditional single-step binding function.  We require only that these functions satisfy a few properties that agree with current biological knowledge.  We also broaden the possible degradation terms beyond first-order degradation.  Again, we require only that these functions satisfy certain biological assumptions.  Finally, we assume first-order translation rates but allow them to vary among mRNAs.  

Our new system retains many advantageous qualitative properties of the previous repressilator after these generalizations.  We proved, for instance, that the system with an odd number of genes has a unique steady state, called the central steady state.  We also showed that the system with an odd number of genes converges to the central steady state or to a periodic orbit.  We worked towards a necessary and sufficient condition for when the central steady state is stable and offered a related conjecture.

For the even case, we characterized when the central steady state exists.  We also give a biological criterion for when a steady state is stable.  However, at the level of generality we propose, we cannot prove the same results as M\"uller \textit{et al}.\ regarding the possible number of steady states.  For specific choices of degradation-rate and transcription-rate functions, one can, however, analyze the limiting dynamics of system \eqref{eq:repress} with $n$ even by using the Poincar\'e-Bendixson Theorem for monotone systems given in \cite{Mallet-Paret1990}.  

Next, we derived new transcription-rate functions from the successive-binding binding assumption.  Recall that the successive-binding function was derived from biological assumptions that are more reasonable than those of the commonly used single-step binding assumption.  In Section 4, we showed that allowing for more general functions can lead to significant changes in dynamics.  For example, numerical simulations showed that amplitudes and periods of a model constructed with the old transcription-rate function and one with our new function differed significantly.  Numerical simulations revealed that the period was most sensitive to the Hill coefficient.  

Going forward, we aim to determine how well the generalizations presented in this work generate more accurate representations of the repressilator.  Specifically, we aim to build off the work of Khammash and Lillacci in \cite{Khammash2010} to compare parameter estimates of previous repressilator models with our generalized model.  The recovered parameters will shed light on certain biological information.  For example, the Hill coefficients in the transcription-rate functions correspond to the number of binding sites on a promoter region.  Next, these fits can shed light on the effectiveness of various transcription-rate and degradation-rate functions.  Finally, we plan to apply the model selection approach from \cite{Khammash2010} to select among hypothesized repressilator models given actual repressilator data. 

In summary, we now better understand stability and limiting dynamics of the repressilator system for a wide range of biologically relevant degradation-rate and transcription-rate functions.  We hope that our results will encourage theoretical and experimental biologists to broaden the possible degradation-rate and transcription-rate functions used to model the repressilator and other gene regulatory networks.  Finally, we expect that allowing general functions for these terms will generate more accurate and predictive models of not only the repressilator but genetic repression in general.  

\subsection*{Acknowledgements}
The authors thank Jake A. Pitt, Ruben Perez-Carrasco, and 2 conscientious referees for their helpful comments and suggestions that helped us improve the work.  AS thanks Mariano Beguerisse D{\'i}az and Heather A. Harrington for helpful discussions.  AS was partially supported by the NSF (DMS-1312473/1513364 and DMS-1752672) and the
Simons Foundation (\#521874).

\bibliographystyle{unsrt}
\bibliography{references}
\end{document}